\documentclass[11pt]{article}

\usepackage[preprint]{acl}

\usepackage{times}
\usepackage{latexsym}
\usepackage{multirow}
\usepackage[T1]{fontenc}

\usepackage[utf8]{inputenc}

\usepackage{microtype}

\usepackage{inconsolata}

\usepackage{graphicx}

\usepackage{enumitem}
\usepackage{amsmath}
\usepackage{amssymb}
\usepackage{booktabs}
\usepackage{enumitem}
\usepackage{xspace}

\usepackage[skins]{tcolorbox}
\usepackage{tabularx}
\usepackage{CJKutf8}
\definecolor{realpii}{rgb}{0.75,0.10,0.10}
\definecolor{surrogate}{rgb}{0.05,0.30,0.70}
\definecolor{residue}{rgb}{0.85,0.45,0.00}
\usepackage{amsthm}
\newtheorem{theorem}{Theorem}
\newtheorem{corollary}{Corollary}
\newcommand{\Adv}{\mathrm{Adv}}
\newcommand{\Leak}{\mathcal{L}}
\newcommand{\San}{\mathsf{San}}

\newcommand{\sys}{\textsc{VtR}\xspace}

\newcommand{\Tstr}{\mathcal{T}_{\mathrm{syn}}}
\newcommand{\Tsem}{\mathcal{T}_{\mathrm{sem}}}
\newcommand{\Ks}{K_{s}}
\newcommand{\Ksyn}{K_{\mathrm{syn}}}
\newcommand{\Km}{K_{\mathrm{master}}}

\title{
Veil-Then-Reveal: Protecting PII Privacy via Reversible Pseudo-Anonymization in Zero-Trust User--LLM Interaction
}

\author{Mohamed Elmahallawy$^\dagger$, Mohamed Shaaban$^\dagger$, and Thomas E. Carroll$^*$ \\
  $^\dagger$School of Engineering and Applied Sciences, Washington State University, Richland, WA 99354, USA \\
  $^*$Pacific Northwest National Laboratory, Richland, WA 99354, USA\\
  \texttt{\{mohamed.elmahallawy,mohamed.shaaban\}@wsu.edu, thomas.carroll@pnnl.gov}
  }

\usepackage[normalem]{ulem}                        

\newcommand{\sref}[1]{Section~\ref{#1}}
\usepackage[misc]{ifsym}
\newcommand{\etal}{et al.}

\begin{document}
\maketitle
\begin{abstract}
Large language models (LLMs) increasingly process sensitive documents and personal information, yet sending prompts containing personally identifiable information (PII) to remote providers exposes {\em data subjects} and {\em organizations} to risks from logging, retention, model training, and unintended disclosure. Existing anonymization methods often {\em mask or remove} PII, while conventional encryption eliminates the lexical and structural cues needed for effective LLM processing. We present {\em Veil-then-Reveal (\sc VtR)}, a user-side PII pseudo-anonymization framework for privacy-preserving LLM interaction under a zero-trust threat model. {\sc VtR} treats remote LLM providers as untrusted and sends only pseudo-anonymized prompts, keeping plaintext PII on the user. \sys first detects PII and replaces it with realistic, same-type, in-vocabulary surrogates before prompts leave the user's device, preserving their linguistic and structural properties. It combines {\em format-preserving encryption (FPE)} for structured PII, such as phone and credit-card numbers, with a {\em keyed-hash dictionary and secure vault} for semantic PII, such as names, locations, and organizations. Session-specific mappings provide cross-session unlinkability, while user-side restoration reversibly replaces surrogates with their original values in LLM responses. We further introduce a rank-class FPE strategy for dates that preserves format and ambiguity while reducing cipher evaluations. Evaluation on three datasets containing 11,322 annotated PII instances achieves {97.9\%} detection recall, {95.2\%} protection of detected values, and {99.9\% round trip; 100\% of echoed surrogates} end-to-end round-trip restoration. 
\end{abstract}

\section{Introduction}
Large language models (LLMs) increasingly process sensitive documents containing {\em personally identifiable information (PII)}. Sending raw PII to remote LLM providers creates substantial privacy risks: prompts may be logged, retained, exposed through model outputs, or exploited to recover sensitive information \cite{das2025security,cheng2026privacy}. Recent incidents further demonstrate that even widely deployed LLM services can expose user data \cite{openai2023chatgptoutage}. These risks motivate a fundamental question: {\em how can users leverage remote LLMs without revealing their raw PII?}

Existing defenses largely rely on redaction or anonymization, but removing PII can destroy context-critical information needed for accurate inference \cite{asthana2025adaptive}. Simple pseudonymization can also leave indirect identifiers that enable re-identification \cite{boutet2025towards}, while masking alone does not prevent sensitive information from resurfacing through retrieval or memory mechanisms \cite{ma2026bodhipromptshield}. Recent approaches such as \emph{PrivacyRestore} \cite{zeng2025privacyrestore} and \emph{SecureGate} \cite{shaaban2026securegate} explore reversible or selective protection, but introduce additional model-side mechanisms, training, or communication overhead.

We present {\em Veil-then-Reveal (\sc{VtR})}, a user-side PII pseudo-anonymization framework that enables user-LLM interaction under a {\em zero-trust} threat model. Instead of deleting sensitive values, {\sc VtR} first detects PII and then replaces each instance locally with realistic, same-type surrogates, allowing the remote LLM to operate on structurally and semantically meaningful text while never receiving the original PII. {\sc VtR} integrates {\em Format-Preserving Encryption (FPE)} for structured identifiers with {\em keyed semantic substitution} for contextual entities. Original PII values and mappings remain exclusively on the user, enabling exact restoration after inference and session-specific unlinkability.

A key novelty is our {\em rank-class FPE} for dates. Unlike digit-wise encryption, which can produce invalid dates or destroy inherent date ambiguity, rank-class FPE encrypts a date within the set of legal interpretations admitted by its rendering. This preserves both syntactic validity and ambiguity while reducing the encryption overhead by {$410$--$730\times$}. In summary, our key contributions are:
\begin{itemize}[leftmargin=*]
\item We propose \emph{\sc{VtR}}, a reversible, user-side pseudo-anonymization framework that keeps raw PII outside the boundary of a remote LLM while preserving useful prompt context and model utility.

\item We develop a {\em hybrid protection design} combining FPE with keyed semantic token substitution, using user-only mappings and session-specific surrogates to provide cross-session unlinkability.

\item We propose {\em rank-class FPE for dates}, retaining valid formatting and temporal ambiguity while markedly reducing encryption cost and improving valid-date generation from {$0.3\%$ to $88.4\%$.}

\item We evaluate {\sc VtR} on {three datasets containing}~{four corpora, including 10,000 ai4privacy documents with} 11,322 annotated PII instances across 8 PII types, achieving {97.9\%} detection recall, {95.2\%} protection of detected values, and {99.9\%} exact round-trip restoration. The results show that residual exposure is governed by detection recall rather than transformation failures.
\end{itemize}

\section{Related Work}

LLMs pose privacy risks through prompt exposure and context leakage \cite{mireshghallah2025position}; multi-turn interactions can expose over $86\%$ of sensitive content \cite{agarwal2024prompt}. Existing defenses use {\em PII detection, cryptographic sanitization, and user-side protection}, but often trade privacy for utility or reversibility.

\vspace{0.7mm}
\noindent{\bf \em PII Detection and Text Anonymization.}
PII protection typically begins with entity detection and masking. Savkin \etal~\cite{savkin2025spy} show that generic NER models often miss PII-specific contexts, while RECAP~\cite{rajgarhia2025evaluation} improves recall by combining regex and LLM-based detection. Undetected PII remains exposed, making detection recall a fundamental privacy limitation. Conventional anonymization, however, masks or removes PII, disrupting linguistic structure and preventing exact recovery. DP-based approaches such as {\sc CusText}~\cite{chen2023customized} provide formal privacy guarantees via token perturbation but may reduce utility and lack exact restoration.

\vspace{1mm}
\noindent{\bf \em Cryptographic Sanitization.}
FPE, such as NIST FF3-1, protects structured values while preserving their format and length. \textsc{Pr}\ensuremath{\epsilon\epsilon}\textsc{mpt}~\cite{chowdhury2025pr} combines FPE for fixed-format PII with metric-DP transformations for semantic attributes. However, such approaches do not provide a unified mechanism for context-dependent entities such as names and locations or exact end-to-end restoration. Our approach similarly uses FPE for structured PII formats (e.g., SSNs, phone numbers, credit cards, etc.), while introducing {\em rank-class FPE} for dates to preserve valid calendar interpretations with fewer cryptographic operations.

\vspace{1mm}
\noindent{\bf \em User-Side and Zero-Trust Protection.}
User-side approaches keep sensitive data from reaching untrusted cloud services. Gim \etal~\cite{gim2024confidential} use trusted execution and multi-party decoding, while {\sc PII Shield}~\cite{holschneider2026pii} performs local LLM-based anonymization with decoy interactions. Chen \etal~\cite{chen2023hide} use local models to replace private entities and restore them in cloud responses. These approaches demonstrate the value of on-device protection but may introduce computational overhead or dependence on learned models. In contrast, our approach uses deterministic keyed mappings and FPE for lightweight, exact user-side restoration without trusted hardware or learned recovery.

\section{Threat Model}
\label{sec:threat}

We consider a user running \sys locally and interacting with a {\em remote} cloud LLM. Before transmission, \sys replaces detected PII with surrogates, so the LLM never receives the original values.

\noindent{\bf \em Attacker model.} We assume an \emph{honest-but-curious} remote LLM provider that performs inference correctly but may inspect, log, retain, analyze, and correlate inputs and outputs across sessions, possibly using auxiliary information. The provider knows the design and public parameters of \sys but cannot access user-side secret keys, surrogate mappings, or the restoration vault, compromise the user, or modify inference.

\noindent{\bf \em Security objectives.} We target three properties: (i) \emph{PII confidentiality}: detected original PII remains unrecoverable from observed surrogates; (ii) \emph{cross-session unlinkability}: the same underlying PII in different sessions is not directly linkable; and (iii) \emph{user-side reversibility}: the user can restore original PII from surrogates returned by the LLM.

\noindent{\bf \em Scope and assumptions.} Guarantees apply only to detected PII; undetected PII may remain exposed, and \sys does not prevent contextual or semantic inference. We assume an uncompromised user device and inaccessible user-side keys, mappings, and restoration state.

\begin{figure*}[t]
  \centering
  \includegraphics[width=0.9\textwidth]{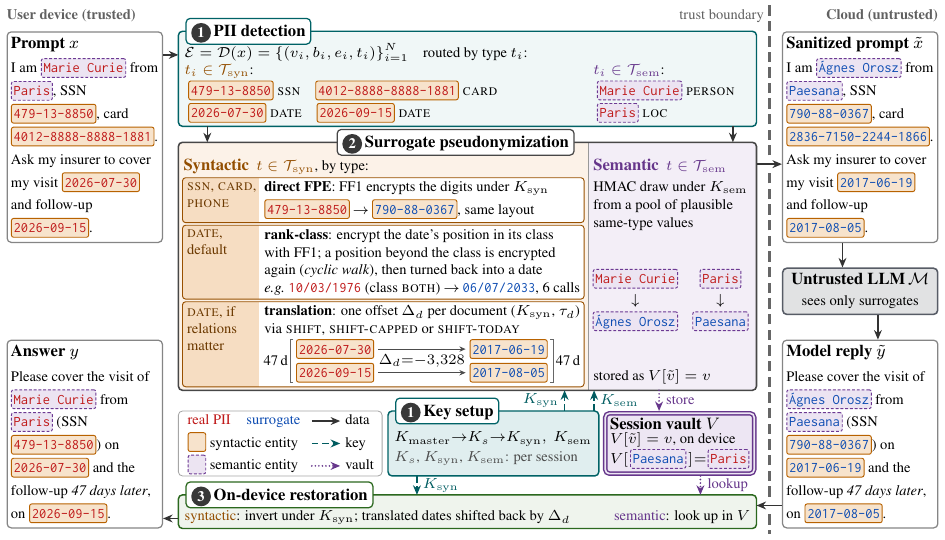}
    \caption{Overview of \sys's protection framework using an example prompt. Real PII is shown in red and surrogates in blue; orange solid frames denote syntactic handling and purple dashed frames semantic handling. (1)~A local detector $\mathcal{D}$ identifies and routes PII in prompt $x$, and \sys derives the required keys; (2)~Syntactic values are encrypted under $\Ksyn$ (or date-shifted by a document-level offset), while semantic values are replaced with same-kind surrogates under $K_{\mathrm{sem}}$ and stored in vault $V$ (Sections~\ref{sec:syntactic}--\ref{sec:semantic}). Only the sanitized prompt $\tilde{x}$, containing the generated surrogates, reaches the remote untrusted model $\mathcal{M}$; (3)~Syntactic surrogates in reply $\tilde{y}$ are inverted under $\Ksyn$, while semantic surrogates are restored from $V$, yielding the final response $y$ (Section~\ref{sec:restore}).}
  \label{fig:framework}
\end{figure*}

\section{Methodology} \label{sec:method}

We design \sys around a simple principle: {\em sensitive information should never leave the user's device in plaintext}. \sys locally detects PII, replaces it with realistic surrogates, and sends only the pseudonymized prompt to the untrusted cloud LLM. The response is deterministically restored on-device, completing a reversible round trip without exposing the original PII or its mappings. As shown in Fig.~\ref{fig:framework}, \sys comprises three components: {\em local PII detection, two-mechanism surrogate generation, and on-device restoration}.

\subsection{PII Detection and Protection} \label{sec:detect}

Given a user prompt \(x\), \sys identifies PII entities with a detector \(\mathcal{D}\) before the prompt is sent to the untrusted LLM \(\mathcal{M}\):
\begin{equation}
\mathcal{E} = \mathcal{D}(x)
= \{(v_i,b_i,e_i,t_i)\}_{i=1}^{N},
\label{eq:entities}
\end{equation}
where \(N\) is the number of detected PII entities, \(v_i = x[b_i:e_i]\) denotes the text span corresponding to the \(i\)-th PII entity, \(b_i\) and \(e_i\) specify its beginning and ending offsets in the prompt, respectively, and \(t_i \in \mathcal{T}\) denotes its PII type. Each detected entity is assigned to one of two protection methods based on its PII type: \emph{syntactic protection} or \emph{semantic protection} (see \sref{sec:surrogate}).

Before transformation, \sys establishes session-specific key material on the user's device. A master key \(\Km\) is generated locally and never leaves the device. For each session, \sys derives a session key \(\Ks\) from \(\Km\) using the session identifier \(\mathtt{sid}\) as session-specific context with the HMAC-based Key Derivation Function (HKDF) \cite{krawczyk2010hmac}:
\begin{equation}
\Ks = \mathrm{HKDF}(\Km, \mathtt{sid}).
\label{eq:sessionkey}
\end{equation}
Binding the derived key to the session identifier ensures that the same PII value can produce different protected representations across different sessions, thereby preventing straightforward cross-session linkability.
To further enforce key separation, \sys derives specific keys from \(\Ks\) using distinct labels for the {\em syntactic} and {\em semantic} mechanisms:
\begin{equation}
\begin{aligned}
K_{\mathrm{syn}} = \mathrm{HKDF}(\Ks, \text{``}\mathtt{syntactic}\text{''}),\\ \nonumber
K_{\mathrm{sem}} = \mathrm{HKDF}(\Ks, \text{``}\mathtt{semantic}\text{''}).
\end{aligned}
\label{eq:trackkeys}
\end{equation}
Thus, the two protection methods use independent session keys, {\em ensuring cross-session unlinkability.}


\subsection{PII-Aware Surrogate Pseudonymization} \label{sec:surrogate}

After PII detection and type-based assignment, \sys replaces each detected entity with a session-specific surrogate while preserving information needed for downstream model reasoning. It uses two complementary protection mechanisms: a \emph{syntactic mechanism} for structured-format PII, such as $\Tstr=\{\textsc{ssn}, \textsc{phone}, \textsc{card}, \textsc{date}\}$, and a \emph{semantic mechanism} for PII that requires preserving its type and linguistic plausibility, such as $\Tsem=\{\textsc{person}, \textsc{loc}, \textsc{org}, \textsc{email}\}$.

\subsubsection{Syntactic  PII protection Method} \label{sec:syntactic}

For $t\in\Tstr$, a surrogate must (a) remain valid for its type, (b) preserve
the original format layout, and (c) be reversible using only the syntactic key $\Ksyn$ (Section~\ref{sec:restore}). \sys provides two keyed mechanisms:
\textbf{\em Encipherment}, which independently applies FPE to each PII; and \textbf{\em Translation}, used for dates when temporal relations must be preserved.

\vspace{1mm}
\noindent\underline{\bf\em 1. Encipherment Mechanism:}
We use FF1 standardized in NIST SP~800-38G \cite{dworkin2001recommendation}, denoting encryption under $\Ksyn$ with type-specific tweak $\tau_t$ by $E_{\Ksyn,\tau_t}$ for domain separation. We use two regimes depending on whether every digit string of the required length is valid; dates require dedicated handling. For \underline{\em direct} identifiers such as SSNs, phone numbers, and card numbers, let $\delta(v)$ extract the digits and $\pi(v)$ preserve the non-digit layout. The surrogate is
\begin{equation}
\tilde{v}=\pi(v)\oplus E_{\Ksyn,\tau_t}(\delta(v)),
\label{eq:template}
\end{equation}
where $\oplus$ inserts the enciphered digits into their original positions, preserving punctuation, spacing, and grouping while transforming the underlying value. Note that a leading telephone country code is copied unchanged. Appendix~\ref{app:worked} provides examples for each syntactic PII type, with Table~\ref{tab:worked} detailing the encipherment steps.

For \underline{\em dates}, validity depends on the {\em calendar semantics} rather than the {\em digit sequence} alone. For a four-digit year, $10^8$ digit strings are possible, but only $73{,}049$ represent valid dates within \sys's supported range (1900--2099), so direct digit encipherment can produce {\em invalid} dates. We therefore use rank-then-encipher \cite{bellare2010ffx}. Let $\mathcal{V}(v)$ be the valid values compatible with the representation of $v$, and let
\begin{equation}
\rho:\mathcal{V}(v)\rightarrow\{0,\ldots,n-1\},\qquad n=|\mathcal{V}(v)|
\end{equation}
be a bijective ranking. The surrogate is
\begin{equation}
\tilde{v}=
\rho^{-1}\!\left(E_{\Ksyn,\tau_t}(\rho(v))\right),
\label{eq:rank}
\end{equation}
which is valid by construction. FF1 operates at the smallest decimal width $w$ satisfying $10^w\ge n$; outputs $\ge n$ are re-enciphered until they fall within the valid domain (\emph{cyclic walk}). The valid set is derived from the input Value and never stored.


\noindent \sys supports five date representations:

\begin{enumerate}[leftmargin=*,itemsep=1pt,topsep=1pt,parsep=0pt]
\item \textbf{\em Fixed-layout dates:} For dates such as \texttt{1985-07-24}, \sys parses and ranks the date within its valid calendar domain, then renders the surrogate in the original layout, preserving field order, widths, and separators.

\item \textbf{\em Ambiguous numeric dates:} For dates such as \texttt{10/03/1976} that admit multiple interpretations, \sys ranks them within their ambiguity class
\begin{equation}
\begin{aligned}
C(v) &\in
\{\textsc{both},\textsc{mdy},\textsc{dmy},\textsc{ymd}\},\\\nonumber
\mathcal{V}(v) &= \{u : C(u)=C(v)\},
\end{aligned}
\label{eq:class}
\end{equation}
thereby preserving the interpretations supported by the input without assuming a locale.

\item \textbf{\em Textual dates:} For dates such as \texttt{Feb 4th, 1909}, \sys canonicalizes the date to an unambiguous format, ranks, and enciphers it, then renders the surrogate in the original textual style.

\item \textbf{\em Timestamps:} For timestamps such as \texttt{1902-08-21T22:37:28.285Z}, \sys transforms the date and time-of-day components using separate tweaks while preserving separators, fractional precision, and time zone.

\item\textbf{\em Partial dates:} For incomplete dates such as \texttt{20th September}, \sys declines the entity because it lacks a complete calendar date suitable for rank-based encipherment.
\end{enumerate}

 Detailed examples are provided in Appendix~\ref{app:worked}.

\vspace{1mm}
\noindent\underline{\bf\em 2. Translation Mechanism:} 
While independent date encipherment preserves  privacy and enables stateless restoration, it breaks temporal relations such as \emph{order}, \emph{duration}, and \emph{tense}. Since these relations depend on date pairs, \sys instead provides document-level translation as an alternative.

Let $N$ be the number of days in the specified range, $\nu$ map dates bijectively to $\mathbb{Z}_N=\{0,\ldots,N-1\}$, and
$a_i=\nu(v_i)$ for dates $\{v_1,..,v_k\}$ in document $d$. Let $t_0=\nu(\mathrm{today})$, $\tau_d$ be a document-specific tweak derived from the document identifier, and $F_{\Ksyn}(x)$ be an HMAC-based pseudorandom function under $\Ksyn$. Given a system-wide displacement bound $S$, \sys deterministically selects a single admissible offset
$\Delta_d\in[-S,S]$:
\begin{equation}
\begin{aligned}
\delta_j &=
-S+\big(F_{\Ksyn}(\tau_d\Vert j)\bmod(2S+1)\big),\\
\Delta_d&=\delta_{j^\ast},\qquad
j^\ast=\min\{\,j\ge0:\delta_j\in\mathcal{A}^{\ast}\,\},
\end{aligned}
\label{eq:Off_selec}
\end{equation}
where $\mathcal{A}^{\ast}$ excludes recognizable shifts, such as those preserving weekdays or seasons. Since $\mathcal{A}^{\ast}$ and the candidate sequence are plaintext- independent, $\Delta_d$ depends only on $(\Ksyn,d)$ and need not be stored. The same offset is applied to every date:
\begin{equation}
\tilde{v}_i=
\nu^{-1}\!\left((a_i+\Delta_d)\bmod N\right),
\label{eq:same_off}
\end{equation}
with the original layout retained during rendering. Translation is a bijection on $\mathbb{Z}_N$, preserving validity, reversibility, and format. Applying the same offset to all dates preserves their relative order and duration unless a date crosses the supported-range boundary (Appendix~\ref{app:translation}). Thus, encipherment independently permutes dates, whereas translation applies a common keyed rotation. Details and acceptance rates of offsets are given in Appendix~\ref{app:tr-params}.

Importantly, translation alone does not preserve \emph{tense}, since the present $t_0$
remains fixed. \sys therefore supports two variants:
\begin{itemize}[leftmargin=*]
\item \textbf{\textsc{shift-capped}:} restricts the offset to values that keep
all dates on their original side of $t_0$:
\begin{equation}
\resizebox{6.4cm}{!}{$
\begin{aligned}
W_d = \big[\, &t_0 - \min\nolimits_{a_i \ge t_0} a_i + 1,\\
              &t_0 - \max\nolimits_{a_i < t_0} a_i - 1 \,\big] \cap [-S, S],
\end{aligned}$}
\label{eq:capped}
\end{equation}
and samples from $W_d\cap\mathcal{A}^{\ast}$. This preserves tense but makes the shift plaintext-dependent and requires storing the offset.


\item \textbf{\textsc{shift-today}:} keeps Eqs.~\eqref{eq:Off_selec}--\eqref{eq:same_off}
unchanged and shifts the stated present by the same offset,
$\tilde{t}_0=t_0+\Delta_d$. Hence \(a_i<t_0 \iff a_i+\Delta_d<\tilde{t}_0,\) preserving tense without limiting the displacement or storing $\Delta_d$.  This assumes that $\mathcal{M}$ evaluates temporal relations relative to the prompt-stated present \cite{gao2025can}.
\end{itemize}
Thus, translation trades independent encipherment for document-level temporal consistency and is used when such relations matter. Table~\ref{tab:temporal} in Appendix~\ref{app:tr-example} compares \textsc{shift}, \textsc{shift-capped}, and \textsc{shift-today} in preserving {\em order, duration, tense}.

 \subsubsection{Semantic PII Protection Mechanism} \label{sec:semantic}

For $t\in\Tsem$, character-level encipherment can destroy an entity's linguistic identity. \sys rather replaces each value with a plausible surrogate of the same {\em semantic type}, preserving its class and relevant attributes. Each value $v$ is normalized to $\bar v$ by collapsing whitespace and lower-casing, mapping formatting variants to exact canonical value.

\noindent\underline{\bf \em Pools and Selection.} For each semantic bucket $b$, \sys maintains a curated pool $P_b$ of type-consistent values. Pools are constructed from Faker-Library -generated locale data over fifteen Latin-script locales \cite{faraglia_faker}, with deterministic ordering (Appendix~\ref{app:pools}). Persons use locale- and gender-consistent given-name and surname pools to maintain plausible name combinations; locations use city names, organizations use company names, and e-mails combine independently selected names and domains. Given the canonical value $\bar v$, \sys deterministically selects a surrogate using HMAC under the semantic session key $K_{\mathrm{sem}}$:
\begin{equation}
\resizebox{6.9cm}{!}{$
\begin{aligned}
\iota &=
\mathrm{HMAC}\big(K_{\mathrm{sem}},,b\Vert r\Vert\bar v\big)
\bmod |P_b|,\
\tilde v &= P_b[\iota],
\end{aligned}
$}
\label{eq:dict}
\end{equation}
where $r$ is a candidate index. Attribute-bearing buckets use independent
instances of the same construction under attribute-specific labels. Because
$K_{\mathrm{sem}}$ is session-specific, the same value maps to different
surrogates across sessions, while determinism preserves consistency within a
session.

\noindent\underline{\bf \em Collision resolution.} Distinct values must remain distinguishable. If the selected $P_b[\iota]$ has already been assigned to another value, \sys increments $r$ and re-evaluates Eq.~\eqref{eq:dict} until an unused candidate is obtained. For the $k$-th distinct value, the first candidate is rejected with probability at most $(k-1)/|P_b|$, which is small for \sys's large pools.


\noindent\underline{\bf \em Vault.}
Unlike the syntactic mechanism, semantic substitution is not reversible from
the key alone because HMAC is one-way. \sys therefore maintains a session-local
vault,
\begin{equation}
V[\tilde v]=v,
\label{eq:semantic-map}
\end{equation}
together with a reverse index from $\bar v$ to $\tilde v$ for repeated values
and collision tracking. The vault remains on-device and is erased when the session ends; restoration is described in Section~\ref{sec:restore}.

Thus, the semantic mechanism trades stateless syntactic encipherment for plausible, type-consistent surrogates and session-local state. Table~\ref{tab:semantic} in Appendix~\ref{app:pools} provides a worked example of canonicalization, keyed surrogate selection, collision handling, and cross-session variation.


 \subsection{On-Device PII  Restoration}
\label{sec:restore}

After surrogate generation, \sys replaces each detected PII entity in the
prompt with its surrogate, producing sanitized prompt $\tilde{x}$ sent to the
untrusted model $\mathcal{M}$. The returned response $\tilde{y}$ is restored
locally before being shown to the user. For each surrogate $\tilde{v}$, \sys
applies the inverse of its generation mechanism:
\begin{equation}
v =
\begin{cases}
\mathcal{R}_{\Ksyn,t}(\tilde{v}),
    & t \in \Tstr, \\[4pt]
V[\tilde{v}],
    & t \in \Tsem,
\end{cases}
\label{eq:restore}
\end{equation}
where $\mathcal{R}_{\Ksyn,t}$ reverses the corresponding syntactic
transformation and $V$ is the session-local semantic vault. For directly
enciphered values, restoration decrypts the digit string and refills the
original template; for ranked domains, it decrypts the rank and un-ranks it
within the recomputed valid domain. For translated dates, $\Delta_d$ is
re-derived from $\Ksyn$ and $\tau_d$ and removed from the cyclic day line:
\begin{equation}
v_i =
\nu^{-1}\!\left((\nu(\tilde{v}_i)-\Delta_d)\bmod N\right).
\label{eq:untranslate}
\end{equation}
Since translation is a bijective rotation, it preserves model-computed dates relative to their surrogates.

All restoration state remains on-device: $\Km$, $\Ks$, and the semantic vault $V$ are never exposed to $\mathcal{M}$ and are discarded at session termination. Thus, the remote LLM sees only surrogate-bearing text, while \sys restores the response locally.

\vspace{-1.5mm}

\section{Performance Evaluation}
\label{sec:eval}

\vspace{-1.5mm}
\subsection{Experimental Setup}
\label{sec:setup}

\noindent{\bf\em Datasets.} We evaluate \sys on four corpora: ai4privacy \cite{ai4privacy_pii_masking_200k}, PUPA \cite{siyan2025papillon}, WildChat \cite{zhao2024wildchat}, and ShareGPT \cite{sharegpt2023vicuna}; details appear in Appendix~\ref{app:datasets}.

\vspace{0.5mm}
\noindent{\bf\em Models.} We evaluate three untrusted models from different vendors: \texttt{Llama-3.2-3B} \cite{grattafiori2024llama3herdmodels}, \texttt{Gemma-2-2B} \cite{team2024gemma}, and \texttt{Phi-3.5-mini} \cite{abdin2024phi3technicalreporthighly}, each evaluated under identical environment and cryptographic parameters (Appendix~\ref{app:compute}).

\vspace{0.5mm}
\noindent{\bf\em Baselines.} We compare \sys with {\em eleven} state-of-the-art protection schemes, grouped by their trust boundary (Appendix~\ref{app:baselines}). \emph{Entity substitution} includes Pr$\varepsilon\varepsilon$mpt \cite{chowdhury2025pr}, Rescriber \cite{zhou2025rescriber}, and Hide-and-Seek (HaS) \cite{chen2023hide}; \emph{local rewrite} includes PAPILLON \cite{siyan2025papillon}, Staab \etal's adversarial anonymizer \cite{staab2402large}, and SEAL \cite{kim2026self}; and \emph{DP} includes SanText+ \cite{yue2021differential}, CusText \cite{chen2023customized}, DP-MLM \cite{meisenbacher2024dp}, DP-Prompt \cite{utpala2023locally}, and InferDPT \cite{tong2025inferdpt}.

\vspace{0.5mm}
\noindent{\bf\em Evaluation metrics.}
We evaluate 4 aspects of \sys: {\em detection, transformation, privacy at the trust boundary, and end-to-end restoration}.
For detection, we report {\em Recall} and {\em Missed}. For transformation, we measure {\em Protected, Round trip, Valid,} and {\em Collisions}, while date-specific evaluation includes {\em Ambiguity kept, Cipher calls,} and translation consistency.
To assess what crosses the trust boundary, we measure {\em Leaked, Leak-free, Retained, Perplexity (PPL), Unchanged, Words changed,} and {\em Local generations}. Finally, end-to-end performance is evaluated using {\em Echoed, Restored, Untrusted answer, Local answer, Value back,} and {\em Restored by}. Metric definitions are provided in Appendix~\ref{app:metrics}.

\begin{table*}[!t]
\centering
\small
\setlength{\tabcolsep}{2pt}
\renewcommand{\arraystretch}{1}
\resizebox{13cm}{!}{
\begin{tabular}{@{}l|l|cc|cc|cc|c|l@{}}
\toprule
& & \multicolumn{2}{c|}{Privacy}
& \multicolumn{2}{c|}{Utility}
& \multicolumn{2}{c|}{Prompt Edit}
& Cost & Restoration \\
\cmidrule(lr){3-4}
\cmidrule(lr){5-6}
\cmidrule(lr){7-8}
\cmidrule(lr){9-9}
\cmidrule(lr){10-10}
\raisebox{ 2pt}[1pt][2pt]{\rotatebox[origin=c]{90}{\text{Category}}}
 & Method
& Leaked$\downarrow$
& Leak-free$\uparrow$
& Retained$\uparrow$
& PPL ($\times$)
& Unchanged$\uparrow$
& Changed$\downarrow$
& Local queries$\downarrow$
& Restored by \\
& 
& (PII values)
& (Prompts)
& (Words)
&
& (Prompts)
& (Words)
& (Number)
& \\
\midrule

& Unprotected
& 95.1 & 3.8 & 100 & 1.00 &   100& 0.0 & 0 & -- \\
\midrule

\multirow{4}{*}{\rotatebox[origin=c]{90}{\shortstack{Entity\\Substitute}}}
& Rescriber
& 27.0 & 64.8 & 93 & 1.00 & 11.7 & 10.6 & 0   & map \\
& HaS
& 19.8 & 69.3 & 86 & 1.02 & 12.5 & 18.4 & 1.9 & model \\
& Pr$\varepsilon\varepsilon$mpt
& 63.8 & 25.5 & 99 & 1.00 & 52.3$^\dagger$ & 3.1  & 0   & key, map \\
& \textbf{\sys}
& {27.0} &  {64.6} &  {95} &  {1.20}
&  {11.9} &  {9.4} & \textbf{0} & \textbf{key, vault} \\
\midrule

\multirow{3}{*}{\rotatebox[origin=c]{90}{\shortstack{Local \\rewrite}}}

& PAPILLON
& 21.7 & 60.5 & 38 & 0.47 & 0.0 & 78.6 & 2.0 & model \\
& Staab \etal
& 6.5 & 90.5 & 62 & 0.81 & 0.3 & 43.9 & 6.0 & not restored \\
& SEAL
& \textbf{0.2} & \textbf{99.4} & 13 & 0.76 & 0.0 & 86.0 & 6.0 & not restored \\
\midrule

\multirow{5}{*}{\rotatebox[origin=c]{90}{DP-based}}
& SanText+ $\varepsilon\!=\!3$
& 51.5 & 37.7 & 84 & 8.53 & 0.6 & 38.1 & 0   & not restored \\
& CusText $\varepsilon\!=\!3$
& 55.0 & 32.1 & 45 & 38.9 & 2.3 & 71.7 & 0   & not restored \\
& DP-MLM $\varepsilon\!=\!25$
& 4.4 & 91.1 & 15 & 30.2 & 0.0 & 64.1 & 0   & not restored \\
& DP-Prompt $\varepsilon\!=\!53.4$
& 23.2 & 61.1 & 27 & 3.21 & 1.1 & 79.5 & 0   & not restored \\
& InferDPT $\varepsilon\!=\!3$
& 0.4 & 98.9 & 5 & 344.4 & 0.0 & 98.5 & 1.0 & model \\
\bottomrule
\end{tabular}}
\caption{Trust-boundary leakage, utility, prompt edits, and restoration cost on $664$ PUPA rows using \texttt{Llama-3.2-3B}. Arrows indicate the direction; values are in \% except PPL and cost, reported as a ratio and count, respectively. $^\dagger$Pr$\varepsilon\varepsilon$mpt has no mechanism for locations, organizations or e-mail addresses: it sends $52.3\%$ of prompts unchanged, and $329$ of these $347$ still expose PII verbatim, whereas \sys, given the same detected entities, edits $268$ of them.}
\label{tab:crossed}
\end{table*}
 \begin{table*}[!t]
\centering\small
\setlength{\tabcolsep}{2pt}
\renewcommand{\arraystretch}{1}
\resizebox{12.3cm}{!}{
\begin{tabular}{@{}l|l|cccc|ccccc|ccc@{}}
\toprule
& & \multicolumn{4}{c|}{\bf Syntactic types}
& \multicolumn{5}{c|}{\bf Semantic types}
& \multicolumn{3}{c}{\bf Aggregate} \\
\cmidrule(lr){3-6}
\cmidrule(lr){7-11}
\cmidrule(lr){12-14}
\raisebox{7pt}[0pt][0pt]{\rotatebox[origin=c]{90}{{Metric}}}
 
& \textbf{Method}
& \textsc{ssn} & \textsc{phone} & \textsc{card} & \textsc{date}
& \textsc{person} & \textsc{loc} & \textsc{org} & \textsc{email}
& \shortstack{\textsc{date}}
& Syntactic & Semantic & All \\

\midrule
& {Detected entities} (\#)
& 475 & 644 & 1{,}014 & 1{,}147
& 4{,}589 & 2{,}642 & 2{,}602 & 887 & 2{,}405
& 3{,}280 & 13{,}125 & 16{,}405 \\
\midrule
\multirow{4}{*}{\rotatebox[origin=c]{90}{\textit{Protected}}}

& Pr$\varepsilon\varepsilon$mpt
& 100 & 99.2 & 100 & 93.2 & 100 & n/a & n/a & n/a & 0.1
& 97.5 & 35.0 & 47.5 \\
& Rescriber
& 100 & 100 & 100 & 100 & 100 & 100 & 100 & 100 & 100
& 100 & 100 & 100 \\
& \sys
& 100 & 99.4 & 100 & 100 & 100 & 100 & 100 & 100 & 67.2
& 99.9 & 94.0 & 95.2 \\
& \sys--w/o FPE
& 100 & 100 & 100 & 100 & 99.7 & 100 & 99.0 & 100 & 67.0
& 100 & 93.6 & 94.9 \\
\midrule

\multirow{4}{*}{\rotatebox[origin=c]{90}{\textit{Round trip}}}

& Pr$\varepsilon\varepsilon$mpt
& 100 & 100 & 100 & 100 & 72.9 & n/a & n/a & n/a & 100
& 100 & 72.9 & 84.0 \\
& Rescriber
& 100 & 100 & 100 & 100 & 100 & 100 & 100 & 100 & 100
& 100 & 100 & 100 \\
& \sys
& 100 & 100 & 100 & 100 & 100 & 100 & 99.2 & 100 & 99.8
& 100 & 99.8 & 99.9 \\
& \sys--w/o FPE
& 100 & 100 & 100 & 100 & 100 & 100 & 99.2 & 100 & 99.8
& 100 & 99.8 & 99.9 \\
\midrule

\multirow{4}{*}{\rotatebox[origin=c]{90}{\textit{Valid}}}

& Pr$\varepsilon\varepsilon$mpt
& 48.4 & 13.1 & 7.2 & 0.3 & 98.8 & n/a & n/a & n/a & 0.0
& 12.2 & 98.8 & 63.2 \\
& Rescriber
& 0.0 & 0.0 & 9.8 & 0.0 & 0.0 & 0.0 & 0.0 & 0.0 & 0.0
& 3.0 & 0.0 & 0.6 \\
& \sys
& 48.0 & 13.3 & 8.5 & 88.4 & 100 & 100 & 100 & 92.3 & 0.1
& 43.1 & 86.4 & 77.3 \\
& \sys--w/o FPE
& 46.9 & 16.8 & 9.3 & 65.9 & 100 & 100 & 100 & 88.2 & 0.1
& 36.0 & 86.0 & 75.5 \\
\midrule

\multirow{4}{*}{\rotatebox[origin=c]{90}{\textit{Collisions}}}

& Pr$\varepsilon\varepsilon$mpt
& 0 & 0 & 0 & 0 & 875 & n/a & n/a & n/a & 0 & 0 & 875 & 875 \\
& Rescriber
& 0 & 0 & 0 & 0 & 0 & 0 & 0 & 0 & 0 & 0 & 0 & 0 \\
& \sys
& 0 & 0 & 0 & 0 & 0 & 0 & 0 & 0 & 0 & 0 & 0 & 0 \\
& \sys--w/o FPE
& 0 & 0 & 0 & 0 & 0 & 0 & 0 & 0 & 0 & 0 & 0 & 0 \\
\bottomrule
\end{tabular}}
\caption{Transformation performance across personal-data types on $10{,}000$ ai4privacy documents.}
\label{tab:pertype}
\end{table*}

\subsection{Main Results: \sys vs.~Baselines}

\noindent{\bf\em  Privacy--utility trade-off.} 
On all $664$ PUPA rows (Table~\ref{tab:crossed}), \sys reduces Leaked from $95.1\%$ to $27.0\%$ ($3.5\times$) while retaining $95\%$ of the request {\em without requiring a user-side LLM query to restore the generated response}. Among entity-substitution methods, \sys matches Rescriber at $27.0\%$ leakage and improves over Pr$\varepsilon\varepsilon$mpt, which leaks $2.4\times$ more while leaving several contextual PII types unprotected. HaS lowers leakage further to $19.8\%$, but retains only $86\%$ of the request and requires about two Local generations per request. Other methods achieve lower leakage by retaining substantially less context: $62\%$ (Staab \etal), $38\%$ (PAPILLON),$27\%$ (DP-Prompt), $15\%$ (DP-MLM), $13\%$ (SEAL), and $5\%$ (InferDPT). At their published $\varepsilon$, SanText+ and CusText also leak $1.9\times$ and $2.0\times$ more than \sys. Thus, \sys offers a strong privacy--utility trade-off, combining $27.0\%$ leakage with $95\%$ retention and no local generation, while preserving realistic, same-type surrogates.

\begin{table*}[t]
\centering\small
\setlength{\tabcolsep}{2pt}
\renewcommand{\arraystretch}{0.9}
\resizebox{13.6cm}{!}{
\begin{tabular}{lcccccccccccl}
\toprule
 & \multicolumn{9}{c}{Leaked (\% values)} & & & \\
\cmidrule(lr){2-10}
Method & \textsc{person} & \textsc{loc} & \textsc{org} & \textsc{email} & \textsc{ssn} & \textsc{phone} & \textsc{card} & \textsc{date} & All & \shortstack{Retained\\(\% words)} & \shortstack{Value\\back (\%)} & \shortstack{Restored\\by} \\
\emph{gold entities} (\#) & 462 & 183 & 65 & 84 & 33 & 67 & 63 & 149 & 1{,}106 & & & \\
\midrule
Unprotected &   100&   100&   100&   100&   100&   100&   100&   100&   100&   100&   38.4 & -- \\
\midrule

Pr$\varepsilon\varepsilon$mpt &  8.2 &   98.9 &   75.4 &   100&  3.0 &  9.0 &  1.6 &   33.6 &   37.1 &   99.9 &   39.5 & key, map \\
Rescriber &  3.0 &  4.9 &  1.5 &  0.0 &  3.0 &  9.0 &  0.0 &  0.0 &  2.8 &   98.0 &   47.0 & map \\
HaS &  3.2 &  3.8 &  1.5 &  0.0 &   24.2 &   13.4 &   23.8 &  2.0 &  5.2 &   96.5 &   42.6 & model \\
{\bf \sys} &  3.2 &  4.9 &  1.5 &  0.0 &  3.0 &  9.0 &  0.0 &  0.0 &  2.9 &   99.1 &   38.2 & key, vault \\
\midrule
PAPILLON &  4.5 &   43.7 &   20.0 &  3.6 &  0.0 &  7.5 &  4.8 &   12.1 &   12.9 &   52.2 &   26.0 & model \\
Staab \etal &  2.6 &  0.0 &  3.1 &  0.0 &  3.0 &  0.0 &  0.0 &  4.0 &  1.9 &   76.7 &  1.0 & not restored \\
SEAL &  {\bf 0.2}&  0.0 &  0.0 &  0.0 &  0.0 &  0.0 &  0.0 &  0.0 &  0.1 &   15.3 &  0.0 & not restored \\
\midrule
SanText+ $\varepsilon\!=\!3$ &   78.6 &   62.3 &   21.5 &  0.0 &  3.0 &  0.0 &  0.0 &   15.4 &   46.6 &   89.1 &   22.1 & not restored \\
CusText $\varepsilon\!=\!3$ &   80.1 &   49.2 &   10.8 &   100&   81.8 &   41.8 &   15.9 &   70.5 &   65.2 &   31.8 &   21.0 & not restored \\
DP-MLM $\varepsilon\!=\!25$ &  3.9 &  2.2 &  0.0 &  0.0 &  0.0 &  0.0 &  0.0 &  0.0 &  2.0 &  9.1 &  1.1 & not restored \\
DP-Prompt $\varepsilon\!=\!53.4$ &   49.8 &   48.6 &   40.0 &   32.1 &   42.4 &   29.9 &   36.5 &   42.3 &   44.5 &   44.5 &   20.6 & not restored \\
InferDPT $\varepsilon\!=\!3$ &  0.4 &  0.5 &  0.0 &  0.0 &  0.0 &  0.0 &  0.0 &  0.0 &  0.3 &  1.8 &   57.9 & model \\
\bottomrule
\end{tabular}}
\caption{Leaked, per personal-data type, \sys and every baseline on the same
$1{,}000$ ai4privacy documents, \texttt{Llama-3.2-3B} as the untrusted model.}
\label{tab:survives}
\end{table*}


\noindent{\bf\em PII Leakage by Data Type.} Table~\ref{tab:survives} extends Table~\ref{tab:crossed} with per-type results on $1{,}000$ ai4privacy documents. \sys leaks $2.9\%$ of values while retaining $99.1\%$ of the request; Rescriber, operating on the same entities, is comparable at $2.8\%$ leakage and $98.0\%$ retention. Six baselines leak more: HaS $1.8\times$, with its hider leaving $24.2\%$ of \textsc{ssn} and $23.8\%$ of \textsc{card} values; PAPILLON $4.4\times$; Pr$\varepsilon\varepsilon$mpt $13\times$, with no mechanism for three types; and SanText+, DP-Prompt, and CusText $15$--$22\times$. The four methods with lower leakage substantially reduce Retained: Staab \etal~to $76.7\%$, SEAL to $15.3\%$, DP-MLM to $9.1\%$, and InferDPT to $1.8\%$.

\begin{table}[!t]
\centering\small
\setlength{\tabcolsep}{2pt}
\renewcommand{\arraystretch}{0.9}
\resizebox{\linewidth}{!}{
\begin{tabular}{l r cc cc cc}
\toprule
& & \multicolumn{2}{c}{\texttt{Llama-3.2-3B}}
& \multicolumn{2}{c}{\texttt{Gemma-2-2B}}
& \multicolumn{2}{c}{\texttt{Phi-3.5-mini}} \\
\cmidrule(lr){3-4}\cmidrule(lr){5-6}\cmidrule(lr){7-8}
\textbf{Method} & $n$ 
& Echoed & Restored
& Echoed & Restored
& Echoed & Restored \\
& (\#)
& \% (\#) & (\%)
& \% (\#) & (\%)
& \% (\#) & (\%) \\
\midrule
\sys
& 863 & 42.2 (364) & 100 & 48.7 (420) & 100 & 79.7 (688) & 100 \\
\sys--w/o FPE
& 862 & 38.5 (332) & 100 & 46.5 (401) & 100 & 75.5 (651) & 100 \\
Pr$\varepsilon\varepsilon$mpt
& 442 & 34.6 (153) & 100 & 42.1 (186) & 100 & 79.4 (351) & 100 \\
Rescriber
& 912 & 48.2 (440) & 100 & 43.0 (392) & 100 & 69.3 (632) & 100 \\
\bottomrule
\end{tabular}}
\caption{End-to-end restoration on ai4privacy using $500$ prompts and three remote unttrusted LLMs.}

\label{tab:e2e_stored}
\end{table}
\noindent{\bf\em Prompt perplexity.} Under \texttt{Llama-3.2-3B}, \sys sends text at $1.20\times$ the unprotected prompt's perplexity, slightly above entity-substitution methods ($1.00$--$1.02$) but far below DP methods: $2.7\times$ (DP-Prompt), $7.1\times$ (SanText+), $25\times$ (DP-MLM), $32\times$ (CusText), and $287\times$ (InferDPT). Pr$\varepsilon\varepsilon$mpt and Rescriber remain at $1.00$ because the former leaves $52.3\%$ of prompts unchanged, while the latter's predictable placeholders reduce perplexity. In contrast, \sys changes fewer words than Rescriber ($9.4\%$ vs.~$10.6\%$) but raises perplexity because its surrogates are key-selected rather than context-predicted. PAPILLON ($0.47$), SEAL ($0.76$), and Staab \etal~($0.81$) fall below $1.00$ by rewriting and discarding $62\%$, $87\%$, and $38\%$ of the prompt, respectively. Thus, perplexity deviations reflect rewriting or obfuscation; \sys remains near the original while retaining $95\%$ of the prompt.

\noindent{\bf\em  Surrogate prompt quality.}
Table~\ref{tab:pertype} compares \sys with \sys w/o FPE, Pr$\varepsilon\varepsilon$mpt, and Rescriber on the same detected entities. Dates are split into \textsc{date} ($1{,}147$ ciphered dates) and \textsc{date fallback} ($2{,}405$ dates, times, ages, and relative expressions), with digit-containing values vaulted. \sys protects at least $99.4\%$ of every type except \textsc{date fallback} ($67.2\%$), similar to \sys w/o FPE ($67.0\%$). Pr$\varepsilon\varepsilon$mpt protects $47.5\%$ overall, lacking mechanisms for locations, organizations, and emails. \sys round-trips $99.9\%$ of protected entities without collisions, versus $72.9\%$ of names for Pr$\varepsilon\varepsilon$mpt, which reuses $875$ surrogates across names. Valid surrogates reach $77.3\%$ for \sys, versus $75.5\%$, $63.2\%$, and $0.6\%$ for \sys w/o FPE, Pr$\varepsilon\varepsilon$mpt, and Rescriber. FPE drives this gain: \textsc{date} validity is $88.4\%$, versus $65.9\%$ without FPE and $0.3\%$ for Pr$\varepsilon\varepsilon$mpt.
Thus, \sys  achieves high type validity, $99.9\%$ round-trip restoration, and zero collisions.

\noindent{\bf\em Restoration across models.} Table~\ref{tab:e2e_stored} traces each value through three untrusted models. Stored-state methods restore all echoed surrogates correctly ($100.0\%$), making recovery dependent on Echoed: \sys recovers $42.2\%$, $48.7\%$, and $79.7\%$ on Llama, Gemma, and Phi, respectively. Its Echoed rate varies $1.9\times$ across models, versus at most $1.4\times$ for competitors, indicating that recovery is primarily limited by the model's tendency to repeat values. Local-regeneration results appear in Appendix~\ref{app:e2e_local}, with additional value-recovery results in Appendix~\ref{app:permethod}, Tables~\ref{tab:survives} and~\ref{tab:valueback}.

\begin{table}[!t]
\centering
\footnotesize
\setlength{\tabcolsep}{2pt}
\resizebox{7cm}{!}{
\begin{tabular}{c|lcc}
\toprule
& Scheme &  {Ambiguity kept (\%)} & {Cipher calls (\#)}    \\
\midrule
\multirow{3}{*}{%
  \rotatebox[origin=c]{90}{%
    \scriptsize\shortstack{\emph{ai4privacy}\\[-2pt]
    $628$ dates}%
  }}
& walk & 37.8 & 1{,}072 \\
& rank & 54.8 & 1.7 \\
& rank-class (\sys) & \textbf{100} & 2.6 \\
\midrule
\multirow{3}{*}{%
  \rotatebox[origin=c]{90}{%
    \scriptsize\shortstack{\emph{WildChat}\\[-2pt]
    $328$ dates}%
  }}
& walk & 96.9 & 1{,}022 \\
& rank & 95.5 & 7.7 \\
& rank-class (\sys) & \textbf{100} & 1.4 \\
\midrule
\multirow{3}{*}{%
  \rotatebox[origin=c]{90}{%
    \scriptsize\shortstack{\emph{ShareGPT}\\[-2pt]
    $879$ dates}%
  }}
& walk & 86.8 & 1{,}060 \\
& rank & 92.8 & 42.1 \\
& rank-class (\sys) & \textbf{100} & 1.8 \\
\bottomrule
\end{tabular}}
\caption{Date encipherment schemes under FF1. The corpora contain $628$, $328$, and $879$ dates, respectively.}
\label{tab:ambiguity} 
\end{table}
 \begin{table}[!t]
\centering
\footnotesize
\setlength{\tabcolsep}{2pt}
\resizebox{7.5cm}{!}{
\begin{tabular}{l|ccc|cc}
\toprule
& \multicolumn{3}{c|}{Temporal consistency (\%)}
& \multicolumn{2}{c}{Date concealment} \\
\cmidrule(lr){2-4}\cmidrule(lr){5-6}
Scheme & Order & Tense & \shortstack{Computed date}
& \shortstack{Shifted (days)} & \shortstack{Unshifted (\#)} \\
\midrule
Unprotected & 100 & 73.2 & 100 & 0 & 97 \\
Encipherment & 53.6 & 24.7 & 0.0 & 11{,}242 & 0 \\
\midrule
\multicolumn{6}{l}{\textbf{\em Translation}} \\
\midrule
\textsc{shift} & 100 & 5.2 & 97.9 & 2{,}224 & 0 \\
\textsc{shift-capped} & 97.9 & 75.3 & 94.8 & 34 & 11 \\
\textsc{shift-today} & 100 & 77.3 & 97.9 & 2{,}224 & 0 \\
\bottomrule
\end{tabular}}
\caption{End-to-end date translation on $97$ ShareGPT files (two dates) using \texttt{Llama-3.2-3B} (remote LLM).}
 
\label{tab:translation}
\end{table}

\subsection{Ablation Study}\label{sec:ablation}

\noindent{\bf\em Choice of Date scheme.} Table~\ref{tab:ambiguity} compares \sys's rank-class scheme with walk and rank. Rank-class preserves $100.0\%$ of ambiguity across all corpora, vs. $37.8$--$96.9\%$ for walk and $54.8$--$95.5\%$ for rank. It also requires $1.4$--$2.6$ cipher calls per date, $410$--$730\times$ fewer than walk; rank require $23\times$ more calls on ShareGPT. Thus, rank-class preserves date format and ambiguity with $2$ cipher calls per date.

\noindent{\bf\em \sys vs.~alternatives: Date translation.}
Table~\ref{tab:translation} evaluates date translation on $97$ ShareGPT documents. Encipherment shifts dates by $11{,}242$ days, yielding only $53.6\%$ Order, $24.7\%$ Tense, and $0.0\%$ Computed date. \textsc{shift} restores Order and Computed date ($100.0\%$, $97.9\%$) but reduces Tense to $5.2\%$. \textsc{shift-capped} raises Tense to $75.3\%$, while \textsc{shift-today} matches the unprotected prompt ($100.0\%$, $77.3\%$, $97.9\%$). Thus, shifting the stated present preserves temporal relations without sacrificing concealment. Additional NER detector results are in \sref{sec:Detec_choice}.

\vspace{-2mm}
\section{Conclusion}
\vspace{-2mm}

We presented \sys, a user-side PII pseudo-anonymization framework for zero-trust LLM interaction. Combining FPE and keyed semantic surrogates with local restoration, \sys protects plaintext PII while preserving linguistic cues. Across three datasets with 11,322 PII instances, \sys achieves 97.9\% detection recall, 95.2\% protection, and 100\% restoration of echoed surrogates.


\clearpage
\section*{Limitations}

\sys has some limitations. First, its protection depends on PII detection recall: although our detector achieves high recall, implicit, ambiguous, and domain-specific PII may require additional detection mechanisms. Second, \sys relies on the remote LLM preserving surrogate values in its response; values that are omitted, modified, or paraphrased cannot be restored exactly. Third, semantic surrogates preserve entity type and linguistic form but may not preserve every fine-grained contextual relationship, which can affect task-specific applications. Finally, our evaluation focuses on text-based LLM interaction and does not cover multimodal inputs, tool calls, or long-termconversational state, where additional privacy and restoration considerations may arise.



\bibliography{_References}

\appendix
\section*{Appendix}

\section{Additional Results} \label{app:results}

\begin{table*}[!t]
\centering\small
\setlength{\tabcolsep}{4pt}
\begin{tabular}{l|c|c|c|c|c|c|c|c|c}
\toprule
 & \multicolumn{9}{c}{Value Recovery (\%)} \\
\cmidrule(lr){2-10}
Method & \textsc{person} & \textsc{loc} & \textsc{org} & \textsc{email} & \textsc{ssn} & \textsc{phone} & \textsc{card} & \textsc{date} & All \\
\midrule
\multicolumn{10}{l}{\emph{restored from stored state}} \\
\sys &   30.3 &   60.1 &   33.8 &   47.6 &   24.2 &   34.3 &   27.0 &   42.3 &   38.2 \\
Rescriber &   39.4 &   66.7 &   55.4 &   60.7 &   18.2 &   46.3 &   14.3 &   55.7 &   47.0 \\
Pr$\varepsilon\varepsilon$mpt &   33.8 &   62.8 &   35.4 &   35.7 &   15.2 &   46.3 &   22.2 &   42.3 &   39.5 \\
\midrule
\multicolumn{10}{l}{\emph{regenerated by a local model from the original prompt}} \\
InferDPT $\varepsilon\!=\!3$ &   66.9 &   67.8 &   55.4 &   50.0 &   27.3 &   53.7 &   30.2 &   43.6 &   57.9 \\
HaS &   32.7 &   51.9 &   47.7 &   53.6 &   51.5 &   56.7 &   50.8 &   41.6 &   42.6 \\
PAPILLON &   26.4 &   53.0 &   29.2 &   27.4 &  6.1 &   35.8 &   17.5 &   26.2 &   30.5 \\
\midrule
\multicolumn{10}{l}{\emph{leaked across and echoed}} \\
Unprotected &   29.4 &   63.9 &   35.4 &   36.9 &   15.2 &   44.8 &   25.4 &   45.0 &   38.4 \\
CusText $\varepsilon\!=\!3$ &   22.5 &   26.2 &  4.6 &   29.8 &   12.1 &   17.9 &  0.0 &   24.2 &   21.0 \\
DP-Prompt $\varepsilon\!=\!53.4$ &   20.1 &   36.6 &   16.9 &   11.9 &  9.1 &   11.9 &  7.9 &   20.8 &   20.6 \\
SanText+ $\varepsilon\!=\!3$ &   28.6 &   48.1 &   13.8 &  0.0 &  0.0 &  0.0 &  0.0 &   10.1 &   22.1 \\
\midrule
\multicolumn{10}{l}{\emph{nothing crosses, nothing returns}} \\
Staab \etal &  0.9 &  0.0 &  1.5 &  0.0 &  0.0 &  0.0 &  0.0 &  4.0 &  1.0 \\
DP-MLM $\varepsilon\!=\!25$ &  1.7 &  2.2 &  0.0 &  0.0 &  0.0 &  0.0 &  0.0 &  0.0 &  1.1 \\
SEAL &  0.0 &  0.0 &  0.0 &  0.0 &  0.0 &  0.0 &  0.0 &  0.0 &  0.0 \\
\bottomrule
\end{tabular}
\caption{Value recovery by personal-data type for \sys and baselines on the same $1{,}000$ ai4privacy documents using \texttt{Llama-3.2-3B} as the remote LLM.}
\label{tab:valueback}
\end{table*}

\begin{table*}[!t]
\centering\small
\setlength{\tabcolsep}{5pt}
\begin{tabular}{l|l|cc|c|c|c|cc}
\toprule
&& \multicolumn{2}{c|}{Before sending} & & \multicolumn{1}{c|}{After} & & \\
\cmidrule(lr){3-4}\cmidrule(lr){6-6}
\raisebox{2pt}[1pt][2pt]{\rotatebox[origin=c]{90}{\text{Category}}}&Method & Detect & Protect & \shortstack{Untrusted\\model} & Restore & \shortstack{On device\\median (mean)} & \shortstack{Local\\generations} \\
 && (s) & (s) & (s) & (s) & (s) & (\#) \\
\midrule
&Unprotected & -- & -- & 4.43 & -- & 0 & 0 \\
\midrule
\multirow{5}{*}{\rotatebox[origin=c]{90}{\shortstack{Entity\\Substitute}}}
&Pr$\varepsilon\varepsilon$mpt & 0.058 & $<$0.001 & 3.33 & $<$0.001 & 0.059 (0.130) & 0 \\
&Rescriber, our detector & 0.053 & $<$0.001 & 3.13 & $<$0.001 & 0.053 (0.122) & 0 \\
&Rescriber, own detector & 0.114 & $<$0.001 & 1.59 & $<$0.001 & 0.114 (0.385) & 1.1 \\
&HaS & 0.057 & 1.000 & 3.10 & 1.77 & 3.54 (5.78) & 1.9 \\
&
{\bf \sys} & 0.057 & $<$0.001 & 3.40 & $<$0.001 & 0.057 (0.125) & 0 \\
\midrule

\multirow{3}{*}{\rotatebox[origin=c]{90}{\shortstack{Local \\rewrite}}}&PAPILLON & -- & 17.7 & 4.85 & 17.8 & 28.4 (27.8) & 2.0 \\
&Staab \etal & -- & 37.1 & 0.726 & -- & 37.1 (39.3) & 6.0 \\
&SEAL & -- & 75.7 & 0.974 & -- & 75.7 (79.6) & 6.0 \\
\midrule
\multirow{5}{*}{\rotatebox[origin=c]{90}{DP-based}}&
SanText+ $\varepsilon\!=\!3$ & -- & 0.039 & 2.27 & -- & 0.039 (0.069) & 0 \\
&CusText $\varepsilon\!=\!3$ & -- & 0.001 & 2.62 & -- & 0.001 (0.002) & 0 \\
&DP-MLM $\varepsilon\!=\!25$ & -- & 0.053 & 2.31 & -- & 0.053 (0.248) & 0 \\
&DP-Prompt $\varepsilon\!=\!53.4$ & -- & 0.164 & 1.72 & -- & 0.164 (0.545) & 0 \\
&InferDPT $\varepsilon\!=\!3$ & -- & 0.003 & 2.37 & 6.92 & 6.93 (5.97) & 1.0 \\
\bottomrule
\end{tabular}
\caption{Time per request on $50$ PUPA prompts, median over the prompts.}
\label{tab:timing}
\end{table*}



\subsection{ Value recovery: \sys vs.~baselines.} \label{app:permethod} 
Table~\ref{tab:valueback} shows the values returned to the user. \sys recovers
$38.2\%$ of values, matching the unprotected prompt ($38.4\%$) despite
leaking no values; Pr$\varepsilon\varepsilon$mpt and Rescriber recover
$39.5\%$ and $47.0\%$, respectively, as models more often echo their
placeholders (Table~\ref{tab:e2e_stored}). InferDPT ($57.9\%$) and HaS
($42.6\%$) exceed the unprotected prompt because their local models access the
original prompt, while PAPILLON recovers $30.5\%$. SanText+, CusText, and
DP-Prompt recover only values leaked in the clear ($20.6$--$22.1\%$), while
Staab \etal, SEAL, and DP-MLM recover at most $1.1\%$. Thus, \sys combines
near-unprotected value recovery with $<3\%$ leakage, high request retention,
type-valid surrogates, and exact restoration without a local model.

\subsection{Time per Request: \sys vs. baselines}
\label{app:timing}

Table~\ref{tab:timing} times \sys and every baseline on the same $50$ PUPA prompts,
whose median length is $50.5$ words, each running on its own after two
warm-up prompts. Detect and Protect run on the user's device before the prompt
is sent and Restore after the answer returns; the untrusted model is listed for
reference and is not part of the device's cost. \sys spends $0.057$\,s per
request on the device, almost all of it in detection, with encryption and
restoration each under a millisecond and no local generation; its detector and
cipher run on the CPU, so it needs no GPU on the device. The baselines that run
a local model spend about $60$ to $1{,}300$ times as long: HaS
$3.54$\,s, InferDPT $6.93$\,s, PAPILLON $28.4$\,s, Staab \etal\ $37.1$\,s and
SEAL $75.7$\,s. SanText+, CusText, DP-MLM and DP-Prompt stay within
$0.164$\,s, but in Table~\ref{tab:crossed} they leak more than \sys or keep
far less of the request.

\subsection{Local regeneration: \sys vs.~baselines} \label{app:e2e_local}


In Table~\ref{tab:e2e_local}, we report results for local-regeneration methods, where leakage is fixed before inference at $4.4\%$ (HaS) and $0.1\%$ (InferDPT). Their untrusted answers contain at most $3.5\%$ of values, while local regeneration recovers $41.2$--$68.5\%$ and $55.0$--$86.6\%$, respectively, depending on the remote model. This additional recovery is achieved by querying a local model that has access to the original prompt. In contrast, \sys restores echoed values exactly from its local state without querying a local model, while keeping the remote model unaware of the original PII. Thus, \sys avoids the additional local inference cost while providing deterministic restoration for values preserved in the remote response.

\begin{table*}[!t]
\centering\small
\setlength{\tabcolsep}{5pt}
\begin{tabular}{l r c cc cc cc}
\toprule
& & & \multicolumn{2}{c}{\texttt{Llama-3.2-3B}}
& \multicolumn{2}{c}{\texttt{Gemma-2-2B}}
& \multicolumn{2}{c}{\texttt{Phi-3.5-mini}} \\
\cmidrule(lr){4-5}\cmidrule(lr){6-7}\cmidrule(lr){8-9}
\textbf{Method} & $n$ & \textbf{Leaked}
& Untrusted & Local
& Untrusted & Local
& Untrusted & Local \\
& (\#) & \% (\#)
& \% (\#) & \% (\#)
& \% (\#) & \% (\#)
& \% (\#) & \% (\#) \\
\midrule
HaS
& 912 & 4.4 (40)
& 1.0 (9) & 41.2 (376)
& 2.2 (20) & 45.8 (418)
& 3.5 (32) & 68.5 (625) \\
InferDPT
& 912 & 0.1 (1)
& 0.1 (1) & 55.0 (502)
& 0.2 (2) & 73.2 (668)
& 0.2 (2) & 86.6 (790) \\
\bottomrule
\end{tabular}
\caption{End-to-end restoration through local-model regeneration on the ai4privacy dataset using $500$ prompts and three untrusted models. ``Leaked'' denotes the percentage of prompts exposing PII to the untrusted model, while ``Untrusted'' and ``Local'' denote restoration from the untrusted and locally regenerated answers, respectively.}
\label{tab:e2e_local}
\end{table*}

\section{Case Studies}
\label{app:cases}

\subsection{End-to-End Round-Trip Examples} \label{sec:example}

This appendix traces two prompts through the complete \sys round trip, showing
the original input, the detected PII and corresponding surrogates, the
pseudo-anonymized prompt sent to the remote untrusted model $\mathcal{M}$, the model's
reply, and the final response after local restoration.
Both are run under the configuration discussed in Section~\ref{sec:eval}: FF1 with
the rank-class date scheme, the \texttt{en\_core\_web\_trf} detector with both
shape-recognizer groups, one fixed session key, and \texttt{Llama-3.2-3B-Instruct}
as $\mathcal{M}$, decoding greedily under the system prompt of the end-to-end
evaluation (\emph{``You are a helpful assistant. Answer concisely. Any names, numbers, dates or identifiers in the message are SYNTHETIC PLACEHOLDERS (not real personal data) inserted for privacy — they are safe to use. Never refuse on privacy grounds; treat identifiers as ordinary data and repeat them exactly as given.''}). Texts are reproduced verbatim, with markdown
formatting and line breaks removed; in Tables~\ref{tab:case2a}
and~\ref{tab:case2b} the replies are cut after 30 words, marked [\ldots].
Table~\ref{tab:case1} presents Case~1, a synthetic record from ai4privacy that exercises both mechanisms: the date and telephone number are enciphered, while the names are replaced with a surrogate from the person pool. Table~\ref{tab:case2} presents Case~2, a real user request from PUPA processed through \sys. The same request is used for all baselines in Tables~\ref{tab:case2a} and~\ref{tab:case2b}, enabling a direct comparison of how \sys preserves the original PII for local restoration while providing the remote untrusted LLM with realistic surrogates that retain sufficient context to generate a reasonable response.

\begin{table*}[!t]
\begin{tcolorbox}[
  enhanced,
  colback=white,
  colframe=black!75,
  colbacktitle=black!75,
  coltitle=white,
  fonttitle=\small\bfseries,
  title={Case 1: A synthetic record with a date and telephone number
  (syntactic and semantic mechanisms)},
  boxrule=0.5pt,
  arc=1.5pt,
  left=4pt, right=4pt, top=2pt, bottom=2pt
]
\small
\begin{tabularx}{\linewidth}{@{}>{\bfseries\raggedright}p{2.35cm}X@{}}
\toprule
Input &
Reminder for \textcolor{realpii}{Richmond Bowie Deckow}: Your primary check-up
is due by \textcolor{realpii}{01/05/1994}. Kindly contact us on
\textcolor{realpii}{079 229 1265} to schedule an appointment. \\

\midrule
Entities and surrogates &
\textsc{person}~\textcolor{realpii}{Richmond Bowie Deckow}
$\to$ \textcolor{surrogate}{Samantha Leonard} (semantic);
\textsc{date}~\textcolor{realpii}{01/05/1994}
$\to$ \textcolor{surrogate}{08/10/1962} (syntactic);
\textsc{phone}~\textcolor{realpii}{079 229 1265}
$\to$ \textcolor{surrogate}{826 334 6183} (syntactic) \\

\midrule
Sent to $\mathcal{M}$ &
Reminder for \textcolor{surrogate}{Samantha Leonard}: Your primary check-up
is due by \textcolor{surrogate}{08/10/1962}. Kindly contact us on
\textcolor{surrogate}{826 334 6183} to schedule an appointment. \\

\midrule
Reply of $\mathcal{M}$ &
I've noted the reminder for \textcolor{surrogate}{Samantha Leonard}'s primary
check-up, scheduled for \textcolor{surrogate}{08/10/1962}. I will make a note
to contact her on \textcolor{surrogate}{826 334 6183} to schedule an
appointment. \\

\midrule
Restored answer &
I've noted the reminder for \textcolor{realpii}{Richmond Bowie Deckow}'s
primary check-up, scheduled for \textcolor{realpii}{01/05/1994}. I will make
a note to contact her on \textcolor{realpii}{079 229 1265} to schedule an
appointment. \\
\bottomrule
\end{tabularx}
\end{tcolorbox}
\caption{Case~1 (ai4privacy, row 8426). Real values are shown in red and
surrogates in blue. The model echoes all three surrogates, which are restored
exactly; the date preserves its ambiguity and the phone number its grouping. The date \texttt{01/05/1994} admits a day-first and a month-first
reading, and its surrogate \texttt{08/10/1962} admits both as well
(Eq.~\eqref{eq:class}); the telephone number keeps its grouping
(Eq.~\eqref{eq:template}); the name is drawn from the person pool. The model
echoes all three surrogates, and each is restored exactly. The given name
\emph{Richmond} is in neither gendered list, so the bucket falls back to a keyed
choice and yields a female surrogate; the pronoun \emph{her} that the model
chose for it survives restoration.}
\label{tab:case1}
\end{table*}
\begin{table*}[!t]
\centering
\small
\setlength{\tabcolsep}{4pt}
\begin{tcolorbox}[
  enhanced,
  colback=white,
  colframe=black!75,
  colbacktitle=black!75,
  coltitle=white,
  fonttitle=\small\bfseries,
  title={Case 2: A real user request through \sys\ (same request used for all baselines in Tables~\ref*{tab:case2a} and~\ref*{tab:case2b})},
  boxrule=0.5pt,
  arc=1.5pt,
  left=4pt, right=4pt, top=2pt, bottom=2pt
]
\begin{tabularx}{\linewidth}{@{}>{\bfseries\raggedright}p{2.35cm}X@{}}
\toprule
Input &
From the list of following programs: BIDA from \textcolor{realpii}{CMU},
Machine Learning from \textcolor{realpii}{Columbia University}, Data Science
from \textcolor{realpii}{JHU}, AI from \textcolor{realpii}{JHU}, and MBA from
\textcolor{realpii}{JHU}, which are the best compliment programs if I want to
study for 2 Masters programs and to become a technical leader at
\textcolor{realpii}{the Boeing Company}? \\

\midrule
Entities and surrogates &
\textsc{organization}~\textcolor{realpii}{CMU}
$\to$ \textcolor{surrogate}{Dawson-Hale};
\textsc{organization}~\textcolor{realpii}{Columbia University}
$\to$ \textcolor{surrogate}{Franklin, Hooper and Nunez};
\textsc{organization}~\textcolor{realpii}{JHU}
$\to$ \textcolor{surrogate}{Rivera, Thomas and Arroyo}
(3 occurrences, one surrogate);
\textsc{organization}~\textcolor{realpii}{the Boeing Company}
$\to$ \textcolor{surrogate}{Lopez-Melton}. \\

\midrule
Sent to $\mathcal{M}$ &
From the list of following programs: BIDA from
\textcolor{surrogate}{Dawson-Hale}, Machine Learning from
\textcolor{surrogate}{Franklin, Hooper and Nunez}, Data Science from
\textcolor{surrogate}{Rivera, Thomas and Arroyo}, AI from
\textcolor{surrogate}{Rivera, Thomas and Arroyo}, and MBA from
\textcolor{surrogate}{Rivera, Thomas and Arroyo}, which are the best
compliment programs if I want to study for 2 Masters programs and to become
a technical leader at \textcolor{surrogate}{Lopez-Melton}? \\

\midrule
Reply of $\mathcal{M}$ &
To become a technical leader at \textcolor{surrogate}{Lopez-Melton}, I
recommend combining programs that complement each other in terms of skills
and knowledge. Considering your goal, I suggest pairing BIDA from
\textcolor{surrogate}{Dawson-Hale} with Machine Learning from
\textcolor{surrogate}{Franklin, Hooper and Nunez}. BIDA (Business Intelligence
and Data Analysis) will provide a solid foundation in business acumen, data
analysis, and decision-making, which are essential for a technical leader
role. Machine Learning from \textcolor{surrogate}{Franklin, Hooper and Nunez}
will equip you with advanced technical skills in machine learning, data
science, and artificial intelligence, which are highly valued in the industry.
This combination will give you a well-rounded skillset, allowing you to
effectively lead technical teams and make data-driven decisions at
\textcolor{surrogate}{Lopez-Melton}. \\

\midrule
Restored answer &
To become a technical leader at \textcolor{realpii}{the Boeing Company}, I
recommend combining programs that complement each other in terms of skills
and knowledge. Considering your goal, I suggest pairing BIDA from
\textcolor{realpii}{CMU} with Machine Learning from
\textcolor{realpii}{Columbia University}. BIDA (Business Intelligence and
Data Analysis) will provide a solid foundation in business acumen, data
analysis, and decision-making, which are essential for a technical leader
role. Machine Learning from \textcolor{realpii}{Columbia University} will
equip you with advanced technical skills in machine learning, data science,
and artificial intelligence, which are highly valued in the industry. This
combination will give you a well-rounded skillset, allowing you to effectively
lead technical teams and make data-driven decisions at
\textcolor{realpii}{the Boeing Company}. \\
\bottomrule
\end{tabularx}
\end{tcolorbox}
\caption{Case~2 (PUPA, row 158) through \sys. Four organizations are replaced
by semantic surrogates; repeated \texttt{JHU} occurrences share one surrogate.
The reply echoes three surrogates, which are restored exactly.}
\label{tab:case2}
\end{table*}

\begin{table*}[t]
\centering\scriptsize
\setlength{\tabcolsep}{3pt}
\renewcommand{\arraystretch}{1.08}
\begin{CJK}{UTF8}{gbsn}
\begin{tabularx}{\textwidth}{@{}>{\raggedright}p{1.75cm}>{\raggedright\arraybackslash}Xc>{\raggedright\arraybackslash}p{4.05cm}>{\raggedright\arraybackslash}p{4.05cm}@{}}
\toprule
Method & Sent to $\mathcal{M}$ & \shortstack{Leaked\\in sent} & Reply of $\mathcal{M}$ (first 30 words) & Returned to the user (first 30 words) \\
\midrule
\multicolumn{5}{@{}l}{\emph{Unprotected}} \\
Unprotected prompt & From the list of following programs: BIDA from \textcolor{realpii}{CMU}, Machine Learning from \textcolor{realpii}{Columbia University}, Data Science from \textcolor{realpii}{JHU}, AI from \textcolor{realpii}{JHU}, and MBA from \textcolor{realpii}{JHU}, which are the best compliment programs if I want to study for 2 Masters programs and to become a technical leader at \textcolor{realpii}{the Boeing Company}? & 4/4 & To become a technical leader at \textcolor{realpii}{Boeing}, you'll want to focus on programs that complement each other in terms of technical skills and industry relevance. Considering the options, I recommend [\ldots] & \emph{the reply, unchanged} \\
\midrule
\multicolumn{5}{@{}l}{\emph{Per-entity substitution}} \\
\sys & From the list of following programs: BIDA from \textcolor{surrogate}{Dawson-Hale}, Machine Learning from \textcolor{surrogate}{Franklin, Hooper and Nunez}, Data Science from \textcolor{surrogate}{Rivera, Thomas and Arroyo}, AI from \textcolor{surrogate}{Rivera, Thomas and Arroyo}, and MBA from \textcolor{surrogate}{Rivera, Thomas and Arroyo}, which are the best compliment programs if I want to study for 2 Masters programs and to become a technical leader at \textcolor{surrogate}{Lopez-Melton}? & 0/4 & To become a technical leader at \textcolor{surrogate}{Lopez-Melton}, I recommend combining programs that complement each other in terms of skills and knowledge. Considering your goal, I suggest pairing BIDA from \textcolor{surrogate}{Dawson-Hale} [\ldots] & To become a technical leader at \textcolor{realpii}{the Boeing Company}, I recommend combining programs that complement each other in terms of skills and knowledge. Considering your goal, I suggest pairing BIDA [\ldots] \\
Pr$\varepsilon\varepsilon$mpt & From the list of following programs: BIDA from \textcolor{realpii}{CMU}, Machine Learning from \textcolor{realpii}{Columbia University}, Data Science from \textcolor{realpii}{JHU}, AI from \textcolor{realpii}{JHU}, and MBA from \textcolor{realpii}{JHU}, which are the best compliment programs if I want to study for 2 Masters programs and to become a technical leader at \textcolor{realpii}{the Boeing Company}? & 4/4 & To become a technical leader at \textcolor{realpii}{Boeing}, you'll want to focus on programs that complement each other in terms of technical skills and industry relevance. Considering the options, I recommend [\ldots] & To become a technical leader at \textcolor{realpii}{Boeing}, you'll want to focus on programs that complement each other in terms of technical skills and industry relevance. Considering the options, I recommend [\ldots] \\
Rescriber & From the list of following programs: BIDA from [AFFILIATION1], Machine Learning from [AFFILIATION2], Data Science from [AFFILIATION3], AI from [AFFILIATION3], and MBA from [AFFILIATION3], which are the best compliment programs if I want to study for 2 Masters programs and to become a technical leader at [AFFILIATION4]? & 0/4 & To become a technical leader at [AFFILIATION4], I would recommend combining BIDA from [AFFILIATION1] with either Machine Learning from [AFFILIATION2] or Data Science from [AFFILIATION3]. Both Machine Learning and Data [\ldots] & To become a technical leader at \textcolor{realpii}{the Boeing Company}, I would recommend combining BIDA from \textcolor{realpii}{CMU} with either Machine Learning from \textcolor{realpii}{Columbia University} or Data Science from \textcolor{realpii}{JHU}. Both Machine [\ldots] \\
HaS & From the list of following programs: IBM from Harvard, Machine Learning from Stanford University, Data Science from MIT, AI from MIT, and MBA from MIT, which are the best compliment programs if I want to study for 2 Masters programs and to become a technical leader at the SpaceX Company? & 0/4 & To become a technical leader at SpaceX, you'll want to focus on programs that combine technical skills with business acumen. Based on your options, I recommend the following combinations: 1. [\ldots] & 为了成为波音公司的技术领导者，我建议学习以下两门课程： \\
\bottomrule
\end{tabularx}
\end{CJK}
\caption{Case~2 through the control and the per-entity substitution methods.
\emph{Sent to $\mathcal{M}$}: the text that crossed the trust boundary,
verbatim. \emph{Leaked in sent}: gold units of the row that survive verbatim
in that text, the only text $\mathcal{M}$ receives; a partial mention, such as
\emph{Boeing} for the unit \emph{the Boeing Company}, is shown in red but not
counted. \emph{Reply of
$\mathcal{M}$}: what the untrusted model returned. \emph{Returned to the
user}: the answer after the method puts values back on the device, from the
session key and vault for \sys, from its pools and mapping for
Pr$\varepsilon\varepsilon$mpt, from its dictionary for Rescriber, and by its
local seek model for HaS. Real values in this last column were
restored after the reply and never crossed the boundary. Real organization
names in red, \sys's surrogates in blue. HaS's released seek model
was trained on a translation task: from the reply it writes a single Chinese
sentence, \emph{To become a technical leader at Boeing, I recommend studying
the following two courses}, and the recommendations that follow in the reply
are lost.}
\label{tab:case2a}
\end{table*}
\begin{table*}[!t]
\centering\scriptsize
\setlength{\tabcolsep}{3pt}
\renewcommand{\arraystretch}{1.08}
\begin{CJK}{UTF8}{gbsn}
\begin{tabularx}{\textwidth}{@{}>{\raggedright}p{1.75cm}>{\raggedright\arraybackslash}Xc>{\raggedright\arraybackslash}p{4.05cm}>{\raggedright\arraybackslash}p{4.2cm}@{}}
\toprule
Method & Sent to $\mathcal{M}$ & \shortstack{Leaked\\in sent} & Reply of $\mathcal{M}$ (first 30 words) & Returned to the user (first 30 words) \\
\midrule
\multicolumn{5}{@{}l}{\emph{Local rewriting}} \\
PAPILLON & I am interested in studying for two Master's programs and becoming a technical leader at \textcolor{realpii}{Boeing}. Can you recommend the best complement programs from the list of provided options? & 0/4 & To become a technical leader at \textcolor{realpii}{Boeing}, it's essential to have a strong foundation in engineering and business. Based on the provided options, here are some complementary Master's program recommendations: [\ldots] & To become a technical leader at \textcolor{realpii}{Boeing}, consider combining Master's programs in Aerospace Engineering, Business Administration, Systems Engineering, Engineering Management, or Supply Chain Management. Research each program thoroughly to find [\ldots] \\
Staab \etal & From the list of following programs: a technical program from a top university, a Machine Learning program from a prestigious university, a Data Science program from a well-known university, an AI program from a reputable university, and an MBA program from a respected university, which are the best compliment programs if I want to study for 2 Masters programs and to become a technical leader at a major aerospace company? & 0/4 & Based on the requirements, I would recommend the following combinations: 1. Technical Program + AI Program: Technical Program (e.g., from Stanford University) for a strong foundation in engineering and technical [\ldots] & \emph{the reply, unchanged} \\
SEAL & From a selection of options at different organizations, which are the best combinations for someone looking to take on a leadership position in a large group? & 0/4 & Based on various studies and research, here are some of the best combinations for someone looking to take on a leadership position in a large group: 1. Bachelor's degree in [\ldots] & \emph{the reply, unchanged} \\
\midrule
\multicolumn{5}{@{}l}{\emph{Differential privacy}} \\
SanText+ ($\varepsilon{=}3$) & From the lists of following programs : glint from \textcolor{realpii}{CMU} Int Machine Systems from \textcolor{realpii}{Columbia University} Suicide EMAIL Science from \textcolor{realpii}{JHU} , AI from \textcolor{realpii}{JHU} insufficiently and MBA from \textcolor{realpii}{JHU} , specifically are the Best compliment vital if I want to study for 2 Masters programs and to become a technical leader AT \textcolor{realpii}{the Boeing Company} evident & 4/4 & To become a technical leader at \textcolor{realpii}{Boeing}, you'll want to focus on programs that complement each other and provide a strong foundation in technical leadership, business acumen, and industry-specific knowledge. [\ldots] & \emph{the reply, unchanged} \\
CusText ($\varepsilon{=}3$) & am part meaning ONE Following programs: BIDA written \textcolor{realpii}{CMU}, machine Learning Made San University, data biology When \textcolor{realpii}{JHU}, dis Written \textcolor{realpii}{JHU}, AND TESOL according \textcolor{realpii}{JHU}, recently were ONE Best melody Programme give isnt Can Can study In 9 specialization program All let become another SKILLS Leadership Time the Sachs Company? & 2/4 & It seems like you're trying to ask about a program or course that you're interested in. You're referring to the following programs: 1. BIDA (Business Intelligence and Data Analytics) from [\ldots] & \emph{the reply, unchanged} \\
DP-MLM ($\varepsilon{=}25$) & From the options of receiving scenarios: Dna from Nan, Computer Johnson from Ivan University, Market Scientist from Pegasus, Als from Se, and 2007 from Employment, which are the largest available periods if I are to interpret for 2 Md studies and to becoming a formal career at the Shuttle Factory? & 0/4 & To determine the largest available periods, I'll assume you're referring to the number of months. Dna from Nan: 12 months Computer Johnson from Ivan University: 4 years (48 months) Market [\ldots] & \emph{the reply, unchanged} \\
DP-Prompt ($\varepsilon{=}53.4$) & If you want to continue your education and aim to become a technical leader at \textcolor{realpii}{the Boeing Company}, which program should you choose? & 1/4 & To become a technical leader at \textcolor{realpii}{Boeing}, I would recommend pursuing a Master's degree in a field such as: 1. Aerospace Engineering 2. Systems Engineering 3. Management Science and Engineering [\ldots] & \emph{the reply, unchanged} \\
InferDPT ($\varepsilon{=}3$) & peri neither LOT fin assuming Excellent Plate spray discretion Resolver Pattern dynamics marijuana DA interpret is tune person not excitement invisible laundry conditional a talk dal fade Over 655 LAND tot Being puzzle motion boy LINK Han Committee & 0/4 & It appears to be a jumbled collection of words. I'll try to break it down and provide a possible interpretation: "peri" might be related to "peril" or "period" "LOT fin" [\ldots] & Assuming a lotus pattern, I need to exercise discretion as a resolver in a dynamic system, but I'm not excited about the situation, and I'm dealing with invisible laundry that's [\ldots] \\
\bottomrule
\end{tabularx}
\end{CJK}
\caption{Case~2 through the local-rewriting and differential-privacy methods,
with the columns of Table~\ref{tab:case2a}. PAPILLON and InferDPT write the
answer returned to the user with a local model, from the original request and
the reply of $\mathcal{M}$; the other methods return the reply unchanged.}
\label{tab:case2b}
\end{table*}

The system prompt addresses two properties of a safety-tuned model. A
sanitized prompt is dense with identifiers, and the model tends to read such
text as a request to process personal data and declines to answer; and a
surrogate can be restored only if the model repeats it verbatim. We measured
both on $40$ ai4privacy prompts of $120$ to $300$ characters, each holding at
least two entities of the syntactic types and one of the semantic types and no
annotated value outside the eight types of Section~\ref{sec:detect}. Under the
plain instruction \emph{You are a helpful assistant}, the model refuses
$22/40$ sanitized prompts and repeats $31/137$ of their surrogates; under the
system prompt above, the refusals fall to $16/40$ and the repeated surrogates
rise to $52/137$. The unprotected prompts, which carry the real values, show
that the effect is specific to sanitized text: under the plain instruction
they are refused $19/40$ times and $40/137$ of their values are repeated, so
sanitization alone makes the model less cooperative, whereas under the system
prompt they are refused $18/40$ times with $49/137$ values repeated, level with
the sanitized prompts. The system prompt therefore removes a penalty that
sanitization would otherwise impose rather than giving the sanitized prompt an
advantage over the original. It is applied identically to every method and
to the unprotected control. The refusals that remain, between $16/40$ and
$19/40$ under either input, stem from the content of the requests themselves.

\subsection{Detector Choice} \label{sec:Detec_choice}

Table~\ref{tab:detection} shows that shape recognizers improve numeric and date Recall, reaching $95.6\%$ on \textsc{ssn}, \textsc{phone}, and \textsc{card} with \texttt{trf}, while semantic Recall depends on the spaCy model ($84.4\%$--$97.9\%$). Together, they reduce Missed by $31\%$, $64\%$, and $79\%$ for \texttt{sm}, \texttt{lg}, and \texttt{trf}, respectively. The \sys configuration reaches $97.9\%$ Recall, missing only $240$ of $11{,}322$ entities, versus $1{,}941$ for Rescriber and $2{,}677$ for HaS. Thus, detector choice changes Missed by $7.8\times$ and largely determines how much PII remains exposed.

\begin{table}[!t]
\centering
\footnotesize
\setlength{\tabcolsep}{3pt}
\begin{tabular}{c|lrrrrc}
\toprule
 
&  & \multicolumn{4}{c}{Recall (\%)}\\
\cmidrule(lr){3-6}
& Detector & \shortstack{\textsc{Num.}} & \textsc{date} & \textsc{Sem.}
& All & \textsc{Missed(\#)} \\
\midrule
\multirow{4}{*}{%
  \rotatebox[origin=c]{90}{%
    \scriptsize\shortstack{No shape\\[-2pt]recognizers}%
  }}
& \texttt{sm}  & 77.0 & 84.9 & 84.4 & 83.4 & 1{,}875 \\
& \texttt{md}  & 79.0 & 85.9 & 88.3 & 86.6 & 1{,}516 \\
& \texttt{lg}  & 56.0 & 86.4 & 94.6 & 87.9 & 1{,}369 \\
& \texttt{trf} & 42.9 & 98.1 & 97.9 & 90.0 & 1{,}131 \\
\midrule
\multirow{4}{*}{%
  \rotatebox[origin=c]{90}{%
    \scriptsize\shortstack{Both shape\\[-2pt]groups}%
  }}
& \texttt{sm}  & 97.9 & 99.9 & 84.4 & 88.5 & 1{,}301 \\
& \texttt{md}  & 97.6 & 99.9 & 88.3 & 91.2 & 994 \\
& \texttt{lg}  & 97.2 & 99.8 & 94.6 & 95.7 & 489 \\
& \texttt{trf} (\sys) & 95.6 & 99.9 & 97.9 & \textbf{97.9} & \textbf{240} \\
\midrule
\multirow{2}{*}{%
  \rotatebox[origin=c]{90}{%
    \scriptsize\shortstack{Specific\\[-2pt]Detector}%
  }}
& Rescriber & 95.7 & 85.2 & 79.8 & 82.9 & 1{,}941 \\
& HaS       &  2.8 & 96.4 & 87.3 & 76.4 & 2{,}677 \\ [3pt]
 
\bottomrule
\end{tabular}
\caption{Detection on ai4privacy using $10{,}000$ documents with
$11{,}322$ annotated entities. \textsc{Num.} denotes the combined
\textsc{SSN}, \textsc{phone}, and \textsc{card} types.}
\label{tab:detection}
\end{table}

\section{Analysis of Date Translation}
\label{app:translation}

This appendix provides additional details on the translation mechanism introduced in Section~\ref{sec:syntactic}, including the motivating failure (Appendix~\ref{app:tr-instance}), parameter settings (Appendix~\ref{app:tr-params}), admissible-offset construction (Appendix~\ref{app:tr-screen}), the offsets used in the worked example (Appendix~\ref{app:tr-example}), and behavior at range boundaries and for model-computed dates (Appendix~\ref{app:tr-wrap}).

\subsection{Failure of Independent Date Encipherment}
\label{app:tr-instance}

A ShareGPT prompt describes an article written on \texttt{12 August 2014} and updated on \texttt{6 January 2017}, $878$ days later. When the two dates are enciphered independently under the same session key using Eq.~\eqref{eq:rank}, they become \texttt{30 December 1981} and \texttt{4 March 1909}: their separation increases to $26{,}599$ days and their chronological order is reversed. The resulting prompt therefore states that the article was updated seventy-two years before it was written. This failure arises because per-date encipherment preserves the format of each value but not the temporal relationships between values. Our translation mechanism instead applies a common document-level offset, allowing relative temporal structure to be preserved while displacing the original dates.

\subsection{Parameters}
\label{app:tr-params}

Translation operates over the $N=73{,}049$ calendar days from 1900-01-01 to 2099-12-31 and uses two parameters.

\noindent{\bf \em Bound $S$.} $S$ bounds the absolute displacement $|\Delta_d|$ and controls the trade-off between concealment and temporal plausibility. Larger values provide greater displacement from the original date, whereas smaller values keep translated dates closer to their original positions. We use $S=4000$ days, approximately eleven years, yielding $2S+1=8001$ candidate offsets in $[-S,S]$.

\noindent{\bf \em Guard $g$.} $g$ specifies the minimum distance between an offset and the nearest whole-year displacement. It prevents a translation from remaining too close to the original date's position within the calendar year. We use $g=15$ days, requiring an accepted offset to differ by at least fifteen days from any whole-year displacement.

\noindent{\bf \em Pseudorandom function.} $F_K$ in Eq.~\eqref{eq:Off_selec} is HMAC-SHA256 truncated to $64$ bits. Reducing its output modulo $2S+1$ introduces a bias below $2^{-50}$ per candidate, which we neglect.

\subsection{Admissible Offsets}
\label{app:tr-screen}

The admissible set in Eq.~\eqref{eq:Off_selec} is 
\begin{equation} 
\small
\mathcal{A}^\ast = \big\{\, \delta \in [-S, S] :\ 7 \nmid \delta,\ \operatorname{dist}(|\delta|,\, Y\mathbb{Z}) \ge g \,\big\}, \label{eq:screen} 
\end{equation}
where $Y=365.25$ and $\operatorname{dist}(x,Y\mathbb{Z})=\min_{q\in\mathbb{Z}}|x-qY|$ denotes the distance from $x$ to the nearest multiple of $Y$. The first condition excludes offsets that preserve the weekday, including $\delta=0$, while the second excludes offsets that remain close to a whole-year displacement. These constraints remove two forms of calendar structure that can reveal systematic date shifts \citep{evans2023detection}. We illustrate them using the visit date \texttt{2026-07-30}, a Thursday and the $211$th day of the year.

\noindent{\bf \em Weekday.} An offset divisible by $7$ preserves the weekday of every date. For example, adding $7$ days maps \texttt{2026-07-30} to \texttt{2026-08-06}, while adding $3500$ days maps it to \texttt{2036-02-28}; both dates are Thursdays. Because events such as clinical visits are not uniformly distributed across weekdays, preserving the weekday distribution can expose the underlying shift. Excluding multiples of $7$ therefore removes this structure and reduces the candidate space by approximately a factor of seven.

\noindent{\bf \em Season.} An offset close to a whole number of years preserves a date's approximate position within the year and can therefore retain seasonal patterns. For example, adding $365$ days maps \texttt{2026-07-30} to \texttt{2027-07-30}, while adding $730$ days maps it to \texttt{2028-07-29}; both correspond to day $211$ of their respective years. Adding $380$ days yields \texttt{2027-08-14}, which is within $15$ days of a whole-year displacement, whereas $381$ days exceeds the guard. Excluding such offsets disrupts seasonal signatures, such as the concentration of respiratory visits during winter.

\noindent{\bf \em Acceptance rate.} Among integer offsets, one in seven is divisible by $7$, so the weekday constraint retains a fraction $6/7$ of candidates. Over any $Y$-day interval, approximately $2g$ offsets lie within $g$ days of a whole-year displacement, so the season constraint retains a fraction $1-2g/Y$. Approximating the two constraints as independent gives $\Pr[\delta_j\in\mathcal{A}^\ast]\approx\frac{6}{7}\left(1-\frac{2g}{Y}\right)=0.857\times0.918=0.787$. Exhaustive enumeration of all $8001$ offsets in $[-4000,4000]$ gives $6858$ offsets passing the weekday constraint, $7376$ passing the season constraint, and $6322$ satisfying both, for an empirical acceptance rate of $0.790$. Thus, the first accepted draw has index $j^\ast=0$ for approximately $79\%$ of documents, with expected draw index $1/0.790\approx1.27$.

\begin{table}[t]
\centering\scriptsize
\setlength{\tabcolsep}{3.5pt}
\begin{tabular}{llll}
\toprule
& \textsc{shift} & \textsc{shift-capped} & \textsc{shift-today} \\
\midrule
offset $\Delta_d$ (days) & $-3{,}328$ & $-27$ & $-3{,}328$ \\
\midrule
birth \texttt{1958-04-12} & \texttt{1949-03-02} & \texttt{1958-03-16} & \texttt{1949-03-02} \\
visit \texttt{2026-07-30} & \texttt{2017-06-19} & \texttt{2026-07-03} & \texttt{2017-06-19} \\
follow-up \texttt{2026-09-15} & \texttt{2017-08-05} & \texttt{2026-08-19} & \texttt{2017-08-05} \\
\midrule
present stated to $\mathcal{M}$ & \texttt{2026-08-13} & \texttt{2026-08-13} & \texttt{2017-07-03} \\
\midrule
order, duration & \checkmark & \checkmark & \checkmark \\
follow-up still pending & $\times$ & \checkmark & \checkmark \\
smallest date movement & $3{,}328$\,d & $27$\,d & $3{,}328$\,d \\
\bottomrule
\end{tabular}
\caption{\textsc{shift} and its two tense remedies on one clinical document,
under one session key, with the true present at \texttt{2026-08-13}. All three preserve
order and duration. \textsc{shift}
moves the follow-up into the past; \textsc{shift-capped} keeps its tense but confines
the offset to $27$ days; \textsc{shift-today} keeps both by moving the stated
present with the data.}
\label{tab:temporal}
\end{table}

\subsection{Offsets of the Worked Example}
\label{app:tr-example}

For the document of Table~\ref{tab:temporal} the candidate interval is
$[-4000, 4000]$.

\paragraph{\textsc{shift} and \textsc{shift-today}.} The first draw gives
$F_{\Ksyn}(\tau_d \Vert 0) \bmod 8001 = 672$, hence
$\delta_0 = -4000 + 672 = -3{,}328$, nine years and $41$ days. It is neither
a multiple of $7$ nor within $g$ of a whole year, so $j^\ast = 0$ and
$\Delta_d = -3{,}328$, which moves the visit to a Monday and from day $211$ to
day $170$ of its year. Both methods use this offset, and \textsc{shift-today}
also states the present as $\tilde{t}_0 = $ \texttt{2026-08-13}
$-\,3{,}328$ days $=$ \texttt{2017-07-03}.

\paragraph{\textsc{shift-capped}.} The latest past date is the visit, $14$
days before $t_0$, and the earliest future date is the follow-up, $33$ days
after it, so Eq.~\eqref{eq:capped} gives $W_d = [-32, 13]$. Of its $46$
offsets, $16$ are admissible, since the guard removes every $|\delta| < 15$
and the weekday test the multiples of $7$; the first admissible draw is
$-27$.

The smallest displacement any date receives is therefore $3{,}328$ days under
\textsc{shift} and \textsc{shift-today}, and $27$ days under
\textsc{shift-capped}.

\subsection{Wrap-Around and Computed Dates}
\label{app:tr-wrap}

\paragraph{Wrap-around.} The modulus in Eq.~\eqref{eq:same_off} guarantees a
valid surrogate and an exact inverse for every document. A date whose shifted
day number leaves $[0, N)$, however, wraps to the far end of the range and
loses its order and distance relative to the other dates of the document,
though not its restoration. No date wraps when $0 \le a_i + \Delta_d < N$ for
all $i$; since $|\Delta_d| \le S$, this holds whenever every date lies at
least $S$ days inside the range, between 1910-12-15 and 2089-01-17. The
synthetic ai4privacy corpus draws dates from the whole range, and two of the
fourteen dates in its multi-date documents, from 1902 and 1907, fall outside
this interval: an offset that carries them past 1900-01-01 wraps them into
the 2090s, where they restore exactly but lose their relations to the rest of
the document.

\paragraph{Computed dates.} Because a rotation commutes with addition, the
inverse also carries back dates the model computed rather than copied. If
$\mathcal{M}$ returns a date $c$ days after the surrogate $\tilde{v}_i$,
Eq.~\eqref{eq:untranslate} yields
\[
\nu^{-1}\big((\nu(\tilde{v}_i) + c - \Delta_d) \bmod N\big)
\;=\; \nu^{-1}(a_i + c)
\]
whenever the result does not wrap, which is $c$ days after the real date. On
the document of Table~\ref{tab:temporal}, thirty days after the surrogate
visit is \texttt{2017-07-19}, which inverts to \texttt{2026-08-29}, thirty
days after the real visit. The inverse of Eq.~\eqref{eq:rank} maps the same
\texttt{2017-07-19} to \texttt{1973-09-30}, since a permutation preserves
neither order nor distance.

\section{Privacy Guarantee}
\label{app:privacy}

We state the privacy guarantee of \sys as a game in which an adversary must
tell apart the sanitized versions of two prompts that agree on a
\emph{leakage function}: the information that surrogates reveal by design,
which is what keeps them useful to the model.

\paragraph{Setting.} The adversary $\mathsf{A}$ is the honest-but-curious
provider of Section~\ref{sec:threat}. It sees every sanitized prompt and
knows the design of \sys, but not $\Km$, the session keys or the vault. Its
replies are computed from what it has seen, so they add nothing to its view,
and restoration is local. We treat the detector as a black box: the guarantee
covers the PII entities it returns. \sys relies on three cryptographic
primitives: HKDF derives its keys, FF1 enciphers syntactic values, and
HMAC-SHA256 selects semantic surrogates and translation offsets. Each keyed
use has its own key: $\Ksyn$ keys FF1 and the translation offsets, and $K_{\mathrm{sem}}$ the semantic hash
(Eq.~\eqref{eq:trackkeys}).

\paragraph{Leakage.} For the prompts $X$ of one session, $\Leak(X)$ consists
of (i) the text outside the transformed PII entities, including PII the
detector missed, and the position and type of each entity; (ii) the format
each surrogate keeps, such as the template of an identifier, the layout and
ambiguity class of a date, and the gender and token count of a name;
(iii) which entities of the session are the same value; and (iv) under
translation, the differences between the dates of each document. $\Leak$
contains nothing that relates values in different sessions.

\paragraph{Privacy game.} In the game of Figure~\ref{fig:privgame},
$\mathsf{A}$ opens sessions and, for each prompt, submits two candidates
$x_0, x_1$ with the same leakage. A secret random bit $b$, drawn once at the
start, decides which candidate is sanitized: $\mathsf{A}$ always receives the
sanitized $x_b$, and at the end outputs a guess $b'$. It may choose each pair
after seeing earlier answers, as a provider steering a conversation could.
$\mathsf{G}_{\Leak}(\mathsf{A})$ denotes one run of the game, which outputs
$1$ when $b' = b$, and $\Pr[\mathsf{G}_{\Leak}(\mathsf{A}) = 1]$ is the
probability of this over the keys, $b$ and the coins of $\mathsf{A}$. Since
guessing at random already wins with probability $1/2$, the advantage
rescales it so that $0$ means no better than guessing and $1$ a certain win:
\[
\Adv_{\sys}(\mathsf{A}) = 2\Pr[\mathsf{G}_{\Leak}(\mathsf{A}) = 1] - 1 .
\]
A small advantage means that sanitized prompts reveal nothing about the PII
beyond $\Leak$; in particular, $\mathsf{A}$ recovers a protected value no
better than it could from $\Leak$ alone.

\begin{figure}[!t]
\centering\small
\begin{tcolorbox}[enhanced,boxrule=0.4pt,arc=1pt,left=4pt,right=4pt,top=3pt,bottom=3pt,colback=white]
\textsc{Initialize}: $\Km \xleftarrow{\$} \{0,1\}^{\kappa}$;\quad
$b \xleftarrow{\$} \{0,1\}$\\[3pt]
\textsc{NewSession}: open session $s$ with a fresh $\mathtt{sid}_s$,
$\Ks = \mathrm{HKDF}(\Km, \mathtt{sid}_s)$, an empty vault $V_s$ and empty
transcripts $X^s_0, X^s_1$\\[3pt]
\textsc{Sanitize}$(s, x_0, x_1)$:\\
\hspace*{1em}if $\Leak(X^s_0 \Vert x_0) \neq \Leak(X^s_1 \Vert x_1)$:
return $\bot$\\
\hspace*{1em}append $x_0$ to $X^s_0$ and $x_1$ to $X^s_1$\\
\hspace*{1em}return $\San_{\Ks}(x_b)$, updating $V_s$\\[3pt]
\textsc{Finalize}$(b')$: return $[\,b' = b\,]$
\end{tcolorbox}
\caption{The privacy game $\mathsf{G}_{\Leak}$.}
\label{fig:privgame}
\end{figure}

\begin{theorem}
\label{thm:privacy}
If \sys enciphers dates rather than translating them, then for every
adversary $\mathsf{A}$ that opens $\sigma$ sessions,
\[
\begin{aligned}
\Adv_{\sys}(\mathsf{A}) \le 2(\sigma+1)\big(&\Adv_{\mathrm{HKDF}}
  + \Adv_{\mathrm{FF1}}\\ &+ \Adv_{\mathrm{HMAC}}\big),
\end{aligned}
\]
where $\Adv_{\mathrm{HKDF}}$, $\Adv_{\mathrm{FF1}}$ and
$\Adv_{\mathrm{HMAC}}$ are the best advantages, in about the time of
$\mathsf{A}$, in telling HKDF, FF1 and
HMAC-SHA256 under a random key from their ideal versions: a random function
for HKDF and HMAC-SHA256, and an independent random permutation per tweak and
width for FF1.
\end{theorem}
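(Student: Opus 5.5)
I will prove Theorem~\ref{thm:privacy} with a standard sequence of hybrid games. The outer step replaces the keys, and an inner step then replaces each primitive by its ideal version. Once all keyed objects are ideal, the remaining claim is information-theoretic: the sanitized output is a function of $\Leak$ plus fresh randomness that is independent of $b$.

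Let $\mathsf{G}_0=\mathsf{G}_{\Leak}$. In $\mathsf{G}_1$, $\mathrm{HKDF}(\Km,\cdot)$ is replaced by a random function, so the $\sigma$ session keys $\Ks$ become independent uniform strings. This costs one $\Adv_{\mathrm{HKDF}}$ via a reduction that answers $\textsc{NewSession}$ with its oracle. I then apply a hybrid over the $\sigma$ sessions. In session $s$, $\mathrm{HKDF}(\Ks,\text{``}\mathtt{syntactic}\text{''})$ and $\mathrm{HKDF}(\Ks,\text{``}\mathtt{semantic}\text{''})$ are replaced by independent uniform $\Ksyn,K_{\mathrm{sem}}$. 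Then FF1 under $\Ksyn$ is replaced by an independent random permutation per tweak and width. Finally, HMAC under $K_{\mathrm{sem}}$ is replaced by a random function. Each session contributes at most $\Adv_{\mathrm{HKDF}}+\Adv_{\mathrm{FF1}}+\Adv_{\mathrm{HMAC}}$. Each reduction simulates the other sessions itself using keys it samples, and it simulates both the rank/unrank and cyclic-walk logic and the collision-resolution loop, which are public computations. Summing the $\sigma$ session hybrids and the initial master-key step gives at most $(\sigma+1)(\Adv_{\mathrm{HKDF}}+\Adv_{\mathrm{FF1}}+\Adv_{\mathrm{HMAC}})$ between the real game and the ideal game $\mathsf{G}^\ast$. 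The factor $2$ comes from $\Adv=2\Pr[\cdot]-1$: I will bound $|\Pr[\mathsf{G}_0=1]-\Pr[\mathsf{G}^\ast=1]|$ and then double it. The bound is deliberately loose in that I charge every primitive in every step.

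The core step is showing $\Pr[\mathsf{G}^\ast=1]=1/2$, which requires arguing that in $\mathsf{G}^\ast$ the view is identically distributed for $b=0$ and $b=1$. I will go through the mechanisms one at a time.

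\begin{itemize}
\item \textbf{Direct identifiers.} Within each session, the digit strings under a fixed tweak $\tau_t$ and width are mapped by a uniformly random permutation. The outputs therefore depend only on the equality pattern of the inputs, which is item (iii) of $\Leak$, and on the template $\pi(v)$, which is item (ii).
\item \textbf{Dates.} A cyclic walk on a random permutation of $\{0,\ldots,10^w-1\}$ restricted to $[0,n)$ induces a uniformly random permutation of $[0,n)$. Since $n$ and the ambiguity class $C(v)$ are part of $\Leak$, the argument for direct identifiers carries over. The same holds for the timestamp components, which use separate tweaks.
\item \textbf{Semantic values.} With a random function, the candidate sequence $\iota_0,\iota_1,\ldots$ for each canonical value $\bar v$ is independent uniform over $P_b$. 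Collision resolution therefore yields a sampling-without-replacement assignment whose distribution depends only on the order in which distinct values first appear and on the buckets, that is, on gender and token count. All of these are fixed by $\Leak$.
\item \textbf{Non-PII text.} The text outside the entities is identical under $x_0$ and $x_1$ by item (i) of $\Leak$.
\end{itemize}

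Independence across sessions follows because each session uses fresh ideal objects. This is why $\Leak$ needs no cross-session component, and why the condition that the leakage of transcript prefixes must agree makes the hybrids work against adaptive pair selection.

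I expect the main obstacle to be making this ideal-world equivalence precise. Two points need care. First, the adversary chooses pairs adaptively. Second, the cyclic-walk and collision-loop behaviour must be shown to depend only on $\Leak$. In particular, the number of re-encipherments must not leak through the outputs, although timing lies outside this model. Also, a single random permutation per width and tweak may be shared across values of different lengths. I would handle adaptivity by lazy sampling: I maintain tables indexed by equality classes of values rather than by the values themselves, and show that the game with $b=0$ and the game with $b=1$ produce the same transcript as the class-indexed simulation. The translation variant is excluded by hypothesis because its offset reveals cross-date differences. That is consistent with item (iv), but it requires a separate argument.
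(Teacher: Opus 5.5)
Your proposal follows the same route as the paper's proof sketch and is sound at that level of detail: you replace HKDF, FF1 and HMAC by their ideal versions session by session at a cost of $(\sigma+1)(\Adv_{\mathrm{HKDF}}+\Adv_{\mathrm{FF1}}+\Adv_{\mathrm{HMAC}})$, you argue that in the ideal game each surrogate is a fresh random element of its format or bucket (via cycle walking for dates and sampling without replacement for pools) that repeats only when the value repeats, and you double the gap because $\Adv = 2\Pr[\cdot]-1$. Your explicit $+1$ for the master-key HKDF step and your lazy-sampling treatment of adaptivity spell out what the paper leaves implicit; the permutation-sharing point you flag is also real for dates and is not handled in the paper either, because if all date classes share the tweak $\tau_t$ and the width $w=5$ (as the paper's description of $\tau_t$ suggests), two dates of equal rank in different ambiguity classes receive correlated surrogates, so your step ``the argument for direct identifiers carries over'' needs a tweak that also encodes the class.
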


\begin{proof}[Proof sketch]
We compare the real game with an \emph{ideal} game in which the cryptography
is replaced by truly random objects.

\emph{Step 1: from real to ideal.} In every session, replace the keys
derived by HKDF with random keys, FF1 with a random permutation for each
tweak and width, and HMAC with a random function. These primitives are
designed so that no efficient adversary can tell them from these ideal
versions. If a replacement changed how often $\mathsf{A}$ wins,
$\mathsf{A}$ could be used to do exactly that, so each replacement changes
the chance that $\mathsf{A}$ wins by at most the corresponding advantage,
which is negligible for secure primitives. Each session has its own keys, so
the replacement is made session by session, which gives the factor
$(\sigma+1)$.

\emph{Step 2: the ideal game reveals only $\Leak$.} In the ideal game a
surrogate is chosen by chance rather than computed from the real value, so it
carries no information about it. A random permutation sends each new value to
a fresh random output, so an enciphered surrogate is a random value in the
format of the original, repeated only when the value repeats; cycle walking
keeps this true for dates \cite{black2002ciphers}. For example,
\texttt{479-13-8850} and \texttt{123-45-6789} both become a random
nine-digit number in the layout \texttt{\_\_\_-\_\_-\_\_\_\_}. A random
function selects a pool entry for each new value, so a semantic surrogate is
a random unused entry of its bucket, again repeated only when the value
repeats: \texttt{Marie Curie} and \texttt{Anna Novak} both become a random
female two-token name. What carries over is therefore only what $\Leak$
contains: the text around the PII, the format of each value, and which values
repeat.

\emph{Conclusion.} The two candidates have the same $\Leak$, so in the ideal
game their sanitized versions have the same distribution and $\mathsf{A}$
wins with probability exactly $1/2$. By Step~1, $\mathsf{A}$ therefore wins
the real game with probability at most $1/2$ plus the total cost of the
replacements. Since the advantage is
$2\Pr[\mathsf{G}_{\Leak}(\mathsf{A}) = 1] - 1$, it is at most twice that
cost, which is the bound.
\end{proof}

\begin{corollary}[Unlinkability]
\label{cor:unlink}
Replacing a value in one session by another value of the same type and
format that appears nowhere else in that session leaves $\Leak$ unchanged, so
$\mathsf{A}$ cannot tell whether two sessions mention the same person with
advantage above the bound of Theorem~\ref{thm:privacy}.
\end{corollary}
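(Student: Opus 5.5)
The corollary follows from Theorem~\ref{thm:privacy} by a reduction: the linking question becomes a special case of the privacy game $\mathsf{G}_{\Leak}$. Take an adversary $\mathsf{B}$ that tries to decide whether two sessions mention the same person. It chooses a transcript for session $s$ containing a value $v$ that also appears in another session $s'$. It also chooses a value $v'$ of the same type and format (same bucket and gender/token count for names, same template or layout and ambiguity class for syntactic values) that occurs nowhere in session $s$. We build an adversary $\mathsf{A}$ for $\mathsf{G}_{\Leak}$. $\mathsf{A}$ opens the same sessions and forwards $\mathsf{B}$'s prompts unchanged, with $x_0=x_1$, in every session except $s$. In session $s$ it submits $x_0$, the prompt with $v$, and $x_1$, the prompt with every occurrence of $v$ replaced by $v'$. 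Finally it outputs $\mathsf{B}$'s guess. The case $b=0$ is then exactly the ``same person'' world and $b=1$ the ``different person'' world. Hence $\mathsf{B}$'s linking advantage equals $\Adv_{\sys}(\mathsf{A})$, and $\mathsf{A}$ uses the same number $\sigma$ of sessions and about the same time.

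The key step is to show that every \textsc{Sanitize} query of $\mathsf{A}$ passes the leakage check, that is, $\Leak(X^s_0\Vert x_0)=\Leak(X^s_1\Vert x_1)$ at every prefix. I will go through the four components of $\Leak$ in turn.
\begin{itemize}
\item Component (i), the text outside entities, positions and types: unchanged, because the replacement touches only the detected span of $v$ and keeps its type. I will assume the detector returns the same span for $v'$, which the black-box detector assumption permits.
\item Component (ii), format: unchanged by the choice of $v'$.
\item Component (iii), the equality pattern within the session: unchanged, because every occurrence of $v$ is replaced by $v'$ and $v'$ is fresh in $s$. Repetitions of $v$ therefore become repetitions of $v'$, and no new coincidences with other values arise.
\item Component (iv): irrelevant, since the theorem assumes encipherment rather than translation.
\end{itemize}
Finally, $\Leak$ by definition contains nothing relating values across sessions. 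So the fact that $v$ also appears in $s'$ while $v'$ does not is invisible to the check. This is exactly where per-session key derivation via HKDF, already accounted for in the theorem, does the work.

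Applying Theorem~\ref{thm:privacy} to $\mathsf{A}$ then bounds the linking advantage by $2(\sigma+1)(\Adv_{\mathrm{HKDF}}+\Adv_{\mathrm{FF1}}+\Adv_{\mathrm{HMAC}})$.

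I expect the main obstacle to be the precise bookkeeping of ``same format'', rather than the reduction itself. One must check that, for names, the attribute-bearing buckets (gender, token count) and the semantic-pool normalization to $\bar v$ give $v$ and $v'$ identical bucket and attribute leakage. One must also check that the fresh-in-session condition holds for the normalized value $\bar v'$ and not just the surface string, because collision resolution in Eq.~\eqref{eq:dict} depends on the set of distinct canonical values. I would handle this by stating the hypothesis of the corollary in terms of canonical values and bucket attributes, and noting that the game's $\bot$ check would otherwise reject the pair.
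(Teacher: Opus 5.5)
Your proposal is correct and is essentially the paper's own argument: the paper justifies the corollary precisely by noting that the substitution $v \mapsto v'$ leaves $\Leak$ unchanged, so the pair of prompts passes the check in $\mathsf{G}_{\Leak}$ and Theorem~\ref{thm:privacy} bounds the advantage, with the per-session key doing the cross-session work. Your explicit reduction (identical candidates outside session $s$) and your component-by-component check of $\Leak$ spell out what the paper says in one sentence, and the side conditions you flag are genuine hypotheses the paper leaves implicit: the detector must treat $v'$ as it treats $v$, freshness must hold for canonical values, and, in addition, $v'$ must not occur in $s'$ so that $b=1$ really is the different-person world.
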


The corollary rests on the per-session key: under one long-lived key, a value
would receive the same surrogate in every session.

\paragraph{Date translation.} Translation moves every date of a document by
the same secret offset $\Delta_d$ of at most $S$ days (Eq.~\eqref{eq:Off_selec}).
It therefore reveals the gaps between the dates of a document, which lets the
model reason about intervals, and hides only where the document lies on the
calendar: each real date lies within $S$ days of its surrogate (about
$\pm 11$ years for $S = 4000$), all positions in that range being equally
likely. The admissible set of Eq.~\eqref{eq:screen} adds one clue: since no
offset is a multiple of $7$ or near a whole number of years, a real date never
falls on its surrogate's weekday or within $g$ days of its day of the year.
Each further document about the same timeline adds evidence. Tense-preserving
variants reveal more: \textsc{shift-today} states the shifted present
$t_0 + \Delta_d$, so a provider that knows the real date $t_0$ recovers
$\Delta_d$ and every date. Encipherment is therefore the default, and
translation is used only when intervals between dates matter.

\section{Examples of the Syntactic Mechanism}\label{app:worked}

Table~\ref{tab:worked} traces the syntactic mechanism of
Section~\ref{sec:syntactic} on one value of each kind under a single session
key: the template and digit string of direct encipherment, and the rank,
domain size and cipher calls of rank-then-encipher for each date
representation, and two dates the cipher declines, which the semantic
mechanism replaces.

\noindent{\bf\em Examples on how Encipherment Mechanism encrypt the PII}The identifier \texttt{479-13-8850}
splits into the template \texttt{\_\_\_-\_\_-\_\_\_\_} and the digit string
\texttt{479138850}; FF1 maps the digits to \texttt{790880367}, and refilling
the template yields \texttt{790-88-0367}. The grouped card number
\texttt{4012-8888-8888-1881} and its bare form \texttt{4012888888881881} share
the digit string \texttt{4012888888881881}, so both encipher to the same digits
\texttt{2836715022441866} and differ only in the template that refills them.
For telephone numbers a leading country code is excluded from $\delta(v)$ and
copied verbatim: \texttt{+1-202-555-0173} keeps the \texttt{+1}, enciphers
\texttt{2025550173} to \texttt{5371444801}, and refills as
\texttt{+1-537-144-4801}, staying in its country and layout.

\noindent{\bf\em Examples of Date Encipherment}
\begin{itemize}[leftmargin=*]

\item \textbf{Fixed-layout dates.} For a three-part date with a four-digit year, \sys parses and ranks the date within its valid calendar domain, then renders the surrogate using the original field order, widths, and separators. For example,
\texttt{1985-07-24} becomes \texttt{1946-10-01}, while
\texttt{12/05/2001} may become \texttt{08/17/1974}, preserving the
\texttt{MM/DD/YYYY} layout. Similarly,
\texttt{31-01-2020} can be transformed to another valid
\texttt{DD-MM-YYYY} date without changing the representation.

\item \textbf{Ambiguous numeric dates.} For dates that admit multiple interpretations, \sys ranks the value within its ambiguity class rather than assuming a locale. For example, \texttt{10/03/1976}, which can represent either 10~March or 3~October, becomes \texttt{06/07/2033}, which remains valid under both interpretations. A day-first-only date such as \texttt{25/12/1999} can become \texttt{23/02/1916}, preserving its day-first interpretation. Likewise, an unambiguous year-first representation such as \texttt{1976/03/10} remains within the corresponding \textsc{ymd} class after transformation.

\item \textbf{Textual dates.} For dates written using month names, \sys canonicalizes the date before rank-based encipherment and renders the surrogate in the original textual style. For example, \texttt{February 4, 1909} becomes \texttt{October 31, 1966}, while \texttt{Feb 4th, 1909} becomes \texttt{Oct 31st, 1966}. Similarly, \texttt{September 12, 1987} can become \texttt{March 28, 2014}, and \texttt{12 September 1987} can be rendered as another valid date while retaining the original day--month--year textual ordering.

\item \textbf{Timestamps.} For timestamps, \sys independently transforms the date and time-of-day components using separate domains and tweaks while preserving the timestamp structure. For example,
\texttt{1902-08-21T22:37:28.285Z} becomes
\texttt{2012-10-10T10:06:55.270Z}. A timestamp without fractional
seconds, such as \texttt{2024-06-15T14:32:10Z}, retains the same precision,
while \texttt{2023-11-02T08:15:42.50+00:00} preserves both its fractional
precision and time-zone offset.

\item \textbf{Dates the cipher declines.} A fragment with no full calendar
date to rank, such as \texttt{20th September} (two digits) or \texttt{1/20}
(three), has no layout with a four-digit year to parse and too few digits to
encipher directly, and neither has a time of day or an age that the detector
labels as a date, such as \texttt{18:13} or \texttt{53 years old}. The
syntactic mechanism declines each such string and hands it to the semantic
mechanism (Section~\ref{sec:semantic}), which issues a surrogate from the
session vault. A partial date keeps its shape and stays a valid day and
month: \texttt{20th September} may become \texttt{10th October} and
\texttt{1/20} may become \texttt{9/29}. Any other string keeps its letters and
separators while each digit is replaced under the session key, so
\texttt{18:13} may become \texttt{40:63}, which hides the value but is not a
valid time. Each surrogate is restored by vault lookup. An expression with no
digit, such as \emph{next week}, has nothing to replace and is sent
unchanged.
\end{itemize}

\begin{table*}[t]
\centering\scriptsize
\setlength{\tabcolsep}{2.5pt}
\resizebox{\linewidth}{!}{
\begin{tabular}{lllll}
\toprule
Input & Decomposition & Rank $\rho$ / $n$ & FF1 output (walks) & Surrogate \\
\midrule
\multicolumn{5}{l}{\emph{Direct encipherment of the digit string}} \\
\texttt{479-13-8850} & $\pi$ \texttt{\_\_\_-\_\_-\_\_\_\_}, $\delta$ \texttt{479138850} & --- & \texttt{790880367} & \texttt{790-88-0367} \\
\texttt{4012-8888-8888-1881} & $\delta$ \texttt{4012888888881881} & --- & \texttt{2836715022441866} & \texttt{2836-7150-2244-1866} \\
\texttt{4012888888881881} & same $\delta$, no separators & --- & \texttt{2836715022441866} & \texttt{2836715022441866} \\
\texttt{+1-202-555-0173} & \texttt{+1} kept, $\delta$ \texttt{2025550173} & --- & \texttt{5371444801} & \texttt{+1-537-144-4801} \\
\midrule
\multicolumn{5}{l}{\emph{Rank-then-encipher over the valid domain}} \\
\texttt{1985-07-24} & layout $4{\times}2{\times}2$, class \textsc{ymd} & 31,250 / 73,049 & 17,074 (1) & \texttt{1946-10-01} \\
\texttt{10/03/1976} & layout $2{\times}2{\times}4$, class \textsc{both} & 11,054 / 28,800 & 19,218 (6) & \texttt{06/07/2033} \\
\texttt{25/12/1999} & layout $2{\times}2{\times}4$, class \textsc{dmy} & 22,058 / 44,249 & 3,660 (3) & \texttt{23/02/1916} \\
\texttt{February 4, 1909} & canonical \texttt{1909-02-04}, all days & 3,321 / 73,049 & 24,409 (1) & \texttt{October 31, 1966} \\
\texttt{Feb 4th, 1909} & canonical \texttt{1909-02-04}, all days & 3,321 / 73,049 & 24,409 (1) & \texttt{Oct 31st, 1966} \\
\texttt{1902-08-21T22:37:28.285Z} & date part, all days & 962 / 73,049 & 41,190 (3) & \texttt{2012-10-10} \\
\quad (same instant) & time part, ms of day & 81,448,285 / 86,400,000 & 36,415,270 (1) & \texttt{T10:06:55.270Z} \\
\midrule
\multicolumn{5}{l}{\emph{Declined by the cipher, replaced from the vault}} \\
\texttt{20th September} & no four-digit year; 2 digits & --- & declined & \texttt{28th December} (vault) \\
\texttt{18:13} & time of day; 4 digits & --- & declined & \texttt{48:01} (vault) \\
\bottomrule
\end{tabular}}
\caption{Worked examples of the syntactic mechanism under FF1, one session
key. $\pi$ and $\delta$ are the template and digit string of
Eq.~\eqref{eq:template}; $\rho$ and $n$ the rank and domain size of
Eq.~\eqref{eq:rank}; \emph{walks} the cipher calls until the output landed
inside the domain; \emph{(vault)} a declined value the semantic mechanism
replaces.}
\label{tab:worked}
\end{table*}

\section{Examples of the Semantic Mechanism}\label{app:pools}

The semantic pools of Section~\ref{sec:semantic} are built from the locale
data providers of the Faker library \cite{faraglia_faker}, version 40.38.0. Each
locale provider ships lists of male given names, female given names,
surnames and, for some locales, cities, compiled from public sources; the
United States lists, for instance, derive from Social Security
Administration given-name tables and Census surname counts.
Table~\ref{tab:pools} lists the fifteen Latin-script locales we draw from and
the number of entries each contributes. The three name lists are kept per
locale, so that a surrogate person draws given name and
surname from one locale, and are also merged into global pools in which a
name occurring in several locales is counted once; the merged pools are used for
e-mail addresses. Cities are not kept per locale: the city lists of all locales
are merged into a single pool of $21{,}396$ entries, of which $16{,}529$ come
from the Italian list and $169$ from a curated list of world cities. The
company pool holds $5{,}392$ distinct names generated from company-name
patterns over the surname lists with a fixed seed, and the e-mail domain pool
holds $15$ common providers. Table~\ref{tab:semantic} lists the surrogates
that these pools yield for six values under one session key.

\begin{table}[t]
\centering\scriptsize
\setlength{\tabcolsep}{3pt}
\begin{tabular}{@{}llrrr@{}}
\toprule
Locale & Language / region & Male names & Female names & Surnames \\
\midrule
\texttt{cs\_CZ} & Czech / Czechia            &   127 &   137 &   200 \\
\texttt{da\_DK} & Danish / Denmark           &   244 &   252 &    91 \\
\texttt{de\_DE} & German / Germany           &   999 & 1,003 &   404 \\
\texttt{en\_GB} & English / United Kingdom   &   182 &   182 &   500 \\
\texttt{en\_US} & English / United States    &   322 &   381 & 1,000 \\
\texttt{es\_ES} & Spanish / Spain            &   473 &   478 & 1,085 \\
\texttt{fi\_FI} & Finnish / Finland          &   199 &   199 &   400 \\
\texttt{fr\_FR} & French / France            &    88 &   127 &   400 \\
\texttt{hu\_HU} & Hungarian / Hungary        &   100 &   100 &   100 \\
\texttt{it\_IT} & Italian / Italy            &   273 &   185 & 1,164 \\
\texttt{nl\_NL} & Dutch / Netherlands        &   251 &   251 & 1,070 \\
\texttt{no\_NO} & Norwegian / Norway         &   100 &   100 &   101 \\
\texttt{pt\_BR} & Portuguese / Brazil        &   142 &   140 &   102 \\
\texttt{ro\_RO} & Romanian / Romania         &   265 &   362 &    39 \\
\texttt{sv\_SE} & Swedish / Sweden           &   500 &   500 &   500 \\
\midrule
\multicolumn{2}{@{}l}{merged, duplicates removed} & 2,969 & 2,750 & 6,337 \\
\bottomrule
\end{tabular}
\caption{Locales feeding the semantic pools and the given names and surnames
each contributes. The city pool ($21{,}396$), company pool ($5{,}392$) and
e-mail domain pool ($15$) are merged across locales.}
\label{tab:pools}
\end{table}

\begin{table}[t]
\centering\scriptsize
\setlength{\tabcolsep}{3pt}
\begin{tabular}{@{}lll@{}}
\toprule
Type & Input $v$ & Surrogate $\tilde{v}$ \\
\midrule
\textsc{person} & \texttt{Marie Curie} & \texttt{Ágnes Orosz} \\
\textsc{person} & \texttt{MARIE~~CURIE} & \texttt{Ágnes Orosz} \\
\textsc{person} & \texttt{Pierre Curie} & \texttt{Carlo Krogh} \\
\textsc{person} & \texttt{Marie} & \texttt{Taru} \\
\textsc{loc} & \texttt{Paris} & \texttt{Paesana} \\
\textsc{org} & \texttt{Sorbonne University} & \texttt{Hamilton-Williams} \\
\textsc{email} & \texttt{marie.curie@sorbonne.fr} & \texttt{piersanti.madden@proton.me} \\
\bottomrule
\end{tabular}
\caption{Semantic surrogates issued under one session key. A re-cased
repetition shares its surrogate; under a second session key
\texttt{Marie Curie} becomes \texttt{Lucie Musil}.}
\label{tab:semantic}
\end{table}

\section{Experimental Setup Details} \label{app:setup}

\subsection{Datasets} \label{app:datasets}

We evaluate \sys on four datasets covering synthetic, real-world, and
benchmark-specific text:

\begin{itemize}[leftmargin=*]
\item {\bf \emph{ai4privacy} \cite{ai4privacy_pii_masking_200k}.} Template-generated
text with gold PII annotations. Among the three general-purpose corpora,
it is the only annotated dataset, so we use it to evaluate detection
recall and per-entity protection.
 
\item {\bf \emph{WildChat} \cite{zhao2024wildchat}.} Raw user--assistant
conversations collected from real-world interactions. It evaluates
whether results observed on template-generated text transfer to naturally
occurring prompts.

\item {\bf \emph{ShareGPT} \cite{sharegpt2023vicuna}.} User-shared
conversations from real-world interactions. Together with WildChat, it
tests generalization to naturally occurring prompts while introducing
different selection biases, as users are less likely to share
conversations containing their own personal data.

\item {\bf \emph{PUPA} \cite{siyan2025papillon}.} The benchmark introduced
with PAPILLON, containing $664$ WildChat turns annotated with gold
personal-data units. We use PUPA to evaluate privacy--utility trade-offs
consistently across all compared systems.

\end{itemize}

\subsection{Detector Configuration} \label{app:detector}

Unless stated otherwise, the detector uses the transformer sequence model (\texttt{en\_core\_web\_trf}) with both shape-recognizer groups enabled. The analyzer in Section~\ref{sec:detect} combines one spaCy model with validating recognizers for credit-card numbers (Luhn), SSNs, telephone numbers, and e-mail addresses. We add two independently toggleable groups of shape recognizers to capture surface forms that the sequence model may miss on multi-locale text. Unlike the validating recognizers, these are regular-expression patterns that match surface form alone and perform no validation: card patterns match any digit sequence of the required length regardless of Luhn validity, while the eleven-digit identifier pattern matches any sequence of that length. Each pattern is assigned a confidence score in $[0,1]$, shown beside it in Table~\ref{tab:shapes}, which the analyzer uses in two ways. First, matches below a threshold are discarded; we set this threshold to $0.4$, which admits all eight patterns, so increasing it prunes patterns from the bottom, with the year-less numeric date pattern ($0.45$) removed first. Second, when two recognizers return the same entity, the lower-scoring match is discarded, while an entity contained within another is discarded regardless of score. The scores are therefore relative: a pattern without separators, which may match ordinary numbers, receives a lower score than one whose punctuation makes false matches less likely.


\begin{table}[!t]
\centering\scriptsize
\setlength{\tabcolsep}{4pt}
\resizebox{\linewidth}{!}{
\begin{tabular}{@{}llr@{}}
\toprule
entity & pattern & score \\
\midrule
\multicolumn{3}{@{}l@{}}{\emph{Identifier shapes}}\\
\textsc{card}  & \texttt{\textbackslash b\textbackslash d\{13,19\}\textbackslash b} & 0.50 \\
\textsc{card}  & \texttt{\textbackslash b(\textbackslash d\{4\}[\ -])\{3\}\textbackslash d\{4\}\textbackslash b} & 0.60 \\
\textsc{ssn}   & \texttt{\textbackslash b\textbackslash d\{3\}.\textbackslash d\{4\}.\textbackslash d\{4\}\textbackslash b} & 0.60 \\
\textsc{ssn}   & \texttt{\textbackslash b\textbackslash d\{11\}\textbackslash b} & 0.50 \\
\textsc{ssn}   & \texttt{\textbackslash b\textbackslash d\{3\}-\textbackslash d\{2\}-\textbackslash d\{4\}\textbackslash b} & 0.60 \\
\textsc{phone} & \texttt{\textbackslash +\textbackslash d[\textbackslash d\ .\textbackslash -]\{8,16\}\textbackslash d} & 0.55 \\
\addlinespace[2pt]
\multicolumn{3}{@{}l@{}}{\emph{Date shapes}}\\
\textsc{date}  & \texttt{\textbackslash b\textbackslash d\{1,2\}(st|nd|rd|th)\textbackslash s+MONTH\textbackslash b} & 0.60 \\
\textsc{date}  & \texttt{\textbackslash b\textbackslash d\{1,2\}/\textbackslash d\{2,4\}\textbackslash b} & 0.45 \\
\bottomrule
\end{tabular}}
\caption{The shape recognizers, as registered with the analyzer.
\texttt{MONTH} abbreviates the alternation over full and three-letter month
names.}
\label{tab:shapes}
\end{table}

 \subsection{Baselines}
\label{app:baselines}

We provide additional details on the baselines spanning entity substitution, local prompt rewriting, and differential-privacy-based protection. These categories differ in whether they operate on detected entities, rewrite the prompt more broadly, or perturb its content to provide privacy.

\vspace{1.5mm}
\noindent\textbf{\em 1. Entity substitution.} These methods explicitly identify personal-data
entities and replace them with encrypted values, typed placeholders, or
surrogates before sending the request to the remote model.

\begin{itemize}[leftmargin=*]
\item {\bf \emph{Pr$\varepsilon\varepsilon$mpt} \cite{chowdhury2025pr}.} It enciphers detected entities with FPE. It natively supports names; for numeric entity types, we apply its FPE primitive to the entities detected by our detector.
\item {\bf \emph{Rescriber} \cite{zhou2025rescriber}.} It replaces entities with typed placeholders such as \texttt{[NAME1]}, selected by a prompted 8B detector, and restores them using a per-conversation dictionary.

\item {\bf \emph{HaS} \cite{chen2023hide}.} It uses a fine-tuned local model to hide entities and a second local model to restore them in the response.
\end{itemize}
For these methods, the entity-level transformation is applied to the same
entities detected by our detector, allowing us to isolate differences in
transformation and restoration rather than detection coverage.

\vspace{1.5mm}
\noindent\textbf{\em 2. Local rewrite.} These methods protect personal information by rewriting
the request with a trusted local model rather than replacing individual
detected entities.
\begin{itemize}[leftmargin=*]
\item {\bf \emph{PAPILLON} \cite{siyan2025papillon}.} It uses a trusted local model to rewrite the request; we reproduce it using the authors' released pipeline and optimized prompts.
\item {\bf \emph{Staab \etal} \cite{staab2402large}.} The adversarial anonymizer alternates between an inference model that predicts personal attributes and a rewriter that modifies the prompt to defeat those predictions.

\item {\bf \emph{SEAL} \cite{kim2026self}.} It distills this adversarial loop into a single local 8B model, which we train using the authors' recipe because pretrained weights are not released.
\end{itemize}
None of these methods sends per-entity surrogates or relies on an explicit PII
detector; instead, the local model rewrites the request as a whole.

\vspace{1.5mm}
\noindent\textbf{\em 3. Differential privacy.} These methods provide formal or approximate
privacy by perturbing tokens, embeddings, or the entire prompt rather than
targeting PII entities explicitly.
\begin{itemize}[leftmargin=*]
\item {\bf \emph{SanText+} \cite{yue2021differential} \& \emph{CusText} \cite{chen2023customized}.} replace tokens using metric local differential privacy over an embedding space.
\item {\bf \emph{DP-MLM} \cite{meisenbacher2024dp}.} It resamples each word from a masked language model conditioned on its context.
\item {\bf \emph{DP-Prompt} \cite{utpala2023locally}.} It regenerates the entire document, with the sampling temperature determined by $\varepsilon$.
\item {\bf \emph{InferDPT} \cite{tong2025inferdpt}.} It perturbs the prompt and uses a local extraction model to recover information from the generated response.
\end{itemize}
None of these methods uses an explicit PII detector; instead, each operates on
the input at the token level or rewrites the full prompt. We evaluate each
method at its reported operating point: $\varepsilon=3$ for SanText+, CusText,
and InferDPT, $\varepsilon=25$ for DP-MLM, and $\varepsilon=53.4$ for DP-Prompt,
corresponding to its central sampling temperature. For DP-MLM and DP-Prompt,
$\varepsilon$ is calibrated using token-level sensitivity, so its numerical
value is not directly comparable across mechanisms.


\paragraph{Baseline handling.} We follow each baseline's intended usage while
adapting its inputs and preprocessing to our evaluation setting. SanText+ and
CusText require corpus-specific fitting: SanText+ derives its vocabulary and
sensitive-word set from the fitting corpus, while CusText derives the candidate
substitution set for each word. For the ai4privacy experiments
(Tables~\ref{tab:survives} and~\ref{tab:valueback}), we fit both methods on
ai4privacy rather than PUPA. Pr$\varepsilon\varepsilon$mpt's name-encryption
mechanism is designed to process multiple names jointly; when applied once per
name, it maps most names to the same surrogate. Therefore, for all ai4privacy
experiments (Tables~\ref{tab:pertype}, \ref{tab:e2e_stored},
\ref{tab:survives}, and~\ref{tab:valueback}), we first pass all distinct names
detected in the corpus in groups of at most $200$ and reuse the resulting
surrogates across prompts. For PUPA (Table~\ref{tab:crossed}), we instead run
Pr$\varepsilon\varepsilon$mpt once per prompt, following its per-request usage;
this changes the surrogate assigned to a name across prompts but does not
affect which original values are exposed.

\subsection{Evaluation Metrics}
\label{app:metrics}

Each metric is reported using the corresponding name of its table column. Throughout the tables, (\%) denotes a percentage, ($\times$) a ratio to the unprotected prompt, (\#) a count, and \% (\#) a percentage followed by its count.

\begin{itemize}[leftmargin=*]
\item {\bf \em Detection \emph{Recall}.} is the fraction of annotated entities for which the detector returns an overlapping entity.
\item {\bf \em \emph{Missed}.} is the number of annotated entities for which the detector returns no overlapping entity. Unless protected by another mechanism, these entities are sent to $\mathcal{M}$ verbatim.
\end{itemize}

\noindent {\bf \em Trust boundary metric categories.} Every method is evaluated on the text that reaches $\mathcal{M}$ using the following metrics:
\begin{itemize}[leftmargin=*]
\item {\bf \em \emph{Leaked}.} is the fraction of PII values that appear verbatim in the text sent to $\mathcal{M}$.
\item {\bf \em \emph{Leak-free}.} is the fraction of prompts containing no PII values verbatim in the text sent to $\mathcal{M}$.
\item {\bf \em \emph{Retained}.} is the fraction of the request's content words that remain in the text sent to $\mathcal{M}$. Leaked and Retained should be considered together because discarding more content can reduce leakage at the cost of utility.
\item {\bf \em \emph{Unchanged}.} is the fraction of prompts sent to $\mathcal{M}$ exactly as written by the user.
\item {\bf \em \emph{Words changed}.} is the mean fraction of words whose form or position differs in the transmitted text; unlike Retained, it counts reordered words as changed.
\item {\bf \em \emph{Perplexity (PPL)}.} is the median, across prompts, of the transmitted text's perplexity divided by that of the unprotected prompt. A value below one indicates more predictable text, not necessarily greater utility.
\item {\bf \em \emph{Local queries}.} is the number of queries to a local model required per request.
\end{itemize}

\noindent {\bf \em Transformation.} Transformation metrics are computed on detected entities using the following measures:
\begin{itemize}[leftmargin=*]
\item {\bf \em \emph{Protected}.} is the fraction of detected entities replaced by the method. A replacement that differs from the original only in case or spacing is not considered protected.
\item {\bf \em \emph{Round trip}.} is the fraction of protected entities that are restored exactly to their original values.
\item {\bf \em \emph{Valid}.} is the fraction of protected entities whose surrogates are valid instances of the corresponding entity type.
\item {\bf \em \emph{Collisions}.} is the number of distinct original values assigned the same surrogate, which prevents unambiguous restoration.
\end{itemize}
Detected dates are reported in two columns: \textsc{date} contains dates handled by the cipher, while \textsc{date fallback} contains the remaining dates, which \sys stores in the session vault when they contain a digit. All methods are evaluated on both sets. A cell marked n/a indicates that the method provides no mechanism for that entity type.

\vspace{2mm}
\noindent {\bf \em Date schemes.} Each date scheme is evaluated on dates containing at least six digits, which all three schemes can encipher: annotated dates from ai4privacy and dates matched by date patterns in WildChat and ShareGPT using the following measures:
\begin{itemize}[leftmargin=*]
\item {\bf \emph{Ambiguity kept}.} is the fraction of dates that parse as calendar dates for which the surrogate preserves exactly the same field-order interpretations as the original.
\item {\bf \emph{Cipher calls}.} is the mean number of cipher invocations required per date and is independent of machine speed.
\end{itemize}

\noindent {\bf \em Date translation.} Each ShareGPT document contributes its first two distinct dates that are at least two days apart, together with a stated present halfway between them. The untrusted model answers three questions under each scheme. The following measures are used:
\begin{itemize}[leftmargin=*]
\item {\bf \emph{Order}.} is the fraction of documents in which the model identifies the surrogate corresponding to the earlier original date as earlier.
\item {\bf \emph{Tense}.} is the fraction of documents in which the model correctly identifies both dates as past or future relative to the stated present.
\item {\bf \emph{Computed date}.} is the fraction of documents in which the date given as seven days after the second date, after applying the scheme's inverse, equals the original second date plus seven days.
\item {\bf \emph{Moved}.} is the median, across documents, of the smaller absolute movement in days between the two original dates and their corresponding surrogates, while \emph{Unmoved} is the number of documents in which neither date changes.
\end{itemize}

\noindent {\bf \em End to end.} End-to-end metrics evaluate whether surrogate values returned by the untrusted model can be restored locally and whether original values are ultimately recovered. The following measures are used:
\begin{itemize}[leftmargin=*]
\item {\bf \emph{Echoed}.} is the fraction of surrogates that the untrusted model returns verbatim in its answer.
\item {\bf \emph{Restored}.} is the fraction of echoed surrogates that are restored exactly to their original values.
\item {\bf \emph{Untrusted answer}.} is the fraction of values present in the untrusted model's answer, while \emph{Local answer} is the fraction present in the answer regenerated by a local model.
\item {\bf \emph{Value recovery}.} is the fraction of entities whose original value appears in the final answer, regardless of how the value is recovered.
\item {\bf \emph{Restored by}.} identifies the mechanism used to recover the original value: a session key without stored state, locally stored state such as a session vault or surrogate-to-value map, or a local model that rewrites the answer. A method may use multiple mechanisms for different entity types; \emph{not restored} denotes a type for which the method provides no recovery mechanism.
\end{itemize}






\subsection{Environment and Cryptographic Parameters}
\label{app:compute}

\paragraph{Computing environment.} All experiments were conducted on a single workstation running Ubuntu 24.04.1 LTS with an Intel Core i9-14900K CPU (24 cores, 32 threads), 125\,GB RAM, and an NVIDIA RTX PRO 6000 Blackwell GPU with 96\,GB memory (driver 575.57, CUDA 12.8). The implementation uses Python 3.12.4, PyTorch 2.10.0, and Transformers 4.57.6 for language-model inference; Presidio Analyzer 2.2.364 and spaCy 3.8.16 with the \texttt{en\_core\_web\_trf} 3.8.0 model for PII detection; and \texttt{cryptography} 48.0.1 for the AES, HKDF, and HMAC primitives.

\paragraph{Cryptographic parameters.} Each experimental run uses a 256-bit master key, from which 256-bit session keys are derived using HKDF-SHA256. FF1 uses AES-256 with ten Feistel rounds and radix~$10$, while the semantic mechanism uses HMAC-SHA256 for keyed hashing.

\end{document}